\theoremstyle{plain}
\newtheorem{theorem}{Theorem}[section]
\newtheorem{proposition}[theorem]{Proposition}
\newtheorem{lemma}[theorem]{Lemma}
\newtheorem{corollary}[theorem]{Corollary}
\theoremstyle{definition}
\newtheorem{example}[theorem]{Example}
\theoremstyle{remark}
\newtheorem*{remark}{Remark}
\newcommand{\bsq}{\blacksquare}
\providecommand{\kak}[1]{(\ref{#1})}
\providecommand{\add}{a^*}
\newcommand{\til}{\bar}
\providecommand{\eq}[1]{\begin{equation}\label{#1}}
\providecommand{\en}{\end{equation}}
\providecommand{\con}{{\mathscr C}^+}
\providecommand{\conn}{{\mathscr C}^+_0}
\providecommand{\RR}{\mathbb R}
\providecommand{\pp}{{\rm p}}
\providecommand{\qq}{{\sigma}}
\providecommand{\OO}{{\rm S}}
\providecommand{\LR}{L^2({\mathbb R})}
\providecommand{\vvv}[1]{\left(\!\!\!\begin{array}{c} #1\end{array}\!\!\!\right)}
\providecommand{\hhh}{{\mathcal H}}
\newcommand{\cH}{\mathcal{H}}
\newcommand{\BR}{\mathbb{R}}
\newcommand{\BC}{\mathbb{C}}
\newcommand{\Ran}{\mathop{\mathrm{Ran}}}
\newcommand{\BN}{\mathbb{N}}
\def\norm#1{\| {#1} \|}
\newcommand{\inner}[2]{ \left(#1,#2\right) }
\newcommand{\diag}{\mathrm{diag}}
\newcommand{\ep}{\epsilon}
\newcommand{\UU}{\widetilde{\Phi}}
\title[spectral analysis of NcHO]{
Spectral analysis of non-commutative harmonic oscillators: 
the lowest eigenvalue and no crossing}
\author{Fumio Hiroshima}
\address{Faculty of Mathematics, Kyushu University,  Fukuoka, 819-0395, Japan}
\email{hiroshima@math.kyushu-u.ac.jp}
\author{Itaru Sasaki}
\address{ Department of Mathematical Sciences,
 Shinshu University, Matsumoto 390--8621, Japan}
\email{isasaki@shinshu-u.ac.jp}
\date{\today}
\keywords{non-commutative harmonic oscillator, multiplicity, the lowest eigenvalue, crossing, no crossing}
\subjclass[2000]{35P05, 35P15}
\begin{document}
%%%%%%%%%%%%%%%%
\begin{abstract}
The lowest eigenvalue of 
non-commutative harmonic oscillators 
$Q(\alpha,\beta)$ ($\alpha>0,\beta>0, \alpha\beta>1$)  is studied.
It is shown that  $Q(\alpha,\beta)$ can be 
decomposed into four self-adjoint operators, 
$$\displaystyle Q(\alpha,\beta)=\bigoplus_{\qq=\pm, \pp=1,2} Q_{\qq \pp},$$
 and all the  eigenvalues  of each operator $Q_{\qq \pp}$ are simple.
We show  that 
the lowest eigenvalue  of $Q(\alpha,\beta)$ 
is simple whenever 
$\alpha\neq \beta$.
Furthermore a Jacobi matrix representation of $Q_{\qq\pp}$ is given and 
spectrum  of $Q_{\qq\pp}$ is considered numerically. 
\end{abstract}
\maketitle
%%%%%%%%%%%%%%%%%%%%%%%%%%%%%%%%%%%%%%%%%%%%

%\setlength{\baselineskip}{15pt}

\section{Introduction}
The non-commutative harmonic oscillator is introduced by 
A.~Parmeggiani and M.~Wakayama \cite{MR1811870,pw2002,pw2003} 
as a non-commutative extension of harmonic oscillators. 
We also refer to  \cite{p2010} which 
is a first account about non-commutative harmonic oscillators  and of their spectral properties. 
It is defined by 
\begin{align}
 Q = Q (\alpha,\beta) = A\otimes 
  \left(-\frac{1}{2}\frac{d^2}{dx^2}+\frac{1}{2}x^2 \right)
      + J \otimes 
      \left(x\frac{d}{dx}+\frac{1}{2}\right),
\end{align}
as 
an operator in $\hhh={\mathbb C}^2\otimes L^2(\RR)$. 
Here $A, J\in {\rm Mat}_2(\RR)$, $A$ is positive definite symmetric, and $J$ skew-symmetric. 
Furthermore $A+iJ$ is positive definite. 
It is shown in \cite{pw2002,pw2003} that $A$ and $J$ can be  assumed to be 
 $A= \begin{pmatrix} \alpha & 0\\ 0&\beta\end{pmatrix}$,
$J= \begin{pmatrix}0 & -1 \\ 1&0\end{pmatrix}$, 
and  $\alpha$ and $\beta$  satisfy 
 \eq{ast}
\alpha>0,\quad \beta>0,\quad   \alpha\beta>1.
\en
We fix $A$ and $J$ as above, and 
throughout  this paper we assume 
\kak{ast}.
Under \kak{ast}, $Q$ is self-adjoint on the domain 
 $D(Q)={\mathbb C}^2\otimes (
 D(d^2/dx^2)\cap D(x^2))$ and 
 has purely discrete spectrum
$E_0\leq E_1 \leq E_2 \leq \cdots \nearrow \infty$.
When $\alpha=\beta$, $Q(\alpha,\beta)$ is equivalent to the direct sum of a harmonic oscillator.
Then $E_j=E_{j+1}$ for $j=0,2,4,\cdots$. 
On the other hand, when 
$\alpha\neq\beta$,  
$Q(\alpha,\beta)$  is regarded as a $q$-deformation of harmonic oscillators with $q=\beta/\alpha$ or $q=\alpha/\beta$, 
and the spectral analysis of  $Q$ is nontrivial. 

An  eigenvector associated with the lowest eigenvalue $E=E_0$ 
is called a ground state in this paper.
 A long-standing problem concerning  eigenvalues of $Q(\alpha,\beta)$ is to determine 
their  multiplicity explicitly. 
Let $\alpha\ne\beta$. 
Let $E_n=E_n(\alpha,\beta)$ denote the $n$-th eigenvalue of $Q(\alpha,\beta)$. 
The map 
$c_n:(\alpha,\beta)\mapsto E_n(\alpha,\beta)\in\RR$ is called 
an  eigenvalue-curve.
To consider the multiplicity of eigenvalues is reduced to 
studying  crossing or no crossing of eigenvalue-curves.  

We state a short history concerning 
studies of the multiplicity of eigenvalues of $Q$. 
In \cite{pw2003} 
it is shown that the multiplicity of any eigenvalues 
of $Q$ is at most three and an alternative proof is given 
in 
 \cite{ochi2001}.
In a numerical  level 
it is  found in \cite{nnw2002}  that 
eigenvalue-curves cross at some points 
but  the lowest eigenvalue is simple. 
The multiplicity of eigenvalues of $Q$  is also considered 
in \cite{iw07}, where it is derived that  
$$\left(n-\frac{1}{2}\right)\min\{\alpha,\beta\} \sqrt{\frac{\alpha\beta-1}{\alpha\beta}}\leq E_{2n-1}\leq E_{2n}\leq \left(n-\frac{1}{2}\right)\max\{\alpha,\beta\}\sqrt{\frac{\alpha\beta-1}{\alpha\beta}}$$
for $n=1,2,3,\cdots$. From this we can see that the multiplicity of $E$ is at most two if $\beta<3\alpha$ or $\alpha<3\beta$. 
In \cite{p2004} it is shown that $E$ is simple but for sufficiently large $\alpha\beta$. 
Furthermore in  \cite{1207.4060} 
it is proven that the lowest eigenvalue is at most two and all the ground state are even 
for 
$(\alpha, \beta)\in D_{\sqrt 2}$, where 
$D_{\sqrt 2}=\{(\alpha,\beta)| \alpha, \beta>\sqrt 2\}$, 
and it is also shown 
that  $E$ is simple   for 
$(\alpha, \beta)\in D$ with some subset  $D\subset D_{\sqrt 2}$.
Recently Wakayama  \cite{wakayama:2012}  breaks through in studying 
the multiplicity of $E$. 
It 
is proven that 
if  all the ground states are even, 
then $E$  is simple  whenever  $\alpha\neq \beta$.
Combining \cite{wakayama:2012} with  \cite{1207.4060}, it is immediate to see that 
$E$  is simple for $(\alpha,\beta)\in D_{\sqrt 2}$.

In this paper we settle down the question concerning the multiplicity of the lowest eigenvalue of $Q$, i.e., we prove that $E$ is simple for all 
values of  $\alpha$ and $\beta$ ($\alpha\ne\beta$) in Theorem \ref{uniqthm}. 
Moreover  
no crossing between  eigenvalue-curves associated 
with an odd eigenvector and 
an even eigenvector is also proven in Corollary \ref{nocrossing}.

This paper is organized as follows. 
In Section \ref{sect.decom}, we decompose $Q(\alpha,\beta)$
into four self-adjoint operators: 
  $Q(\alpha,\beta)=\bigoplus_{\qq=\pm, \pp=1,2} Q_{\qq \pp}$. 
  It is shown that each $Q_{\qq\pp}$ is  equivalent to 
some Jacobi matrix  $\widehat{Q}_{\qq\pp}$, and 
 all the eigenvalues of $Q_{\qq\pp}$ are simple.
In Section \ref{sec.simplicity}, we show that the lowest eigenvalue of $Q(\alpha,\beta)$
is simple. 
In Section \ref{sec.pio}, we construct  a unitary transformation
 $Q_{\qq\pp}\mapsto \bar{Q}_{\qq\pp}$ such that $e^{-t\bar{Q}_{\qq\pp}}$
is positivity improving, and it is shown  that the ground state is in a positive cone. 
In Section \ref{sec.difference}, 
we show that 
$
 \widehat{Q}_{-\pp} - \widehat{Q}_{+\pp} \geq \Delta(\alpha,\beta)$, $\pp=1,2$,  
 with some 
$\Delta(\alpha,\beta)$. 
In particular, if $\Delta(\alpha,\beta)>0$, then 
 there  is no  crossing between 
the $n$-th eigenvalue-curve of $Q_{-\pp}$ and that of $Q_{+\pp}$.  
In  Section \ref{sec.finite}, we show some numerical results.

\section{Decomposition of $Q(\alpha,\beta)$ and Jacobi matrix}{\label{sect.decom}}
\subsection{Decomposition of $Q(\alpha,\beta)$}
Let $a=\frac{1}{\sqrt{2}}(x+\frac{d}{dx})$ and 
 $a^*=\frac{1}{\sqrt{2}}(x-\frac{d}{dx})$
 be the annihilation operator and the creation operators, respectively.
 In terms of $a$ and $\add$, $Q$ can be expressed as 
\begin{align}
  Q=A(a^* a + \frac{1}{2})+ \frac{J}{2}(aa-\add \add).
\end{align}
Let $\cH_+$ (resp. $\cH_-$)  be the set of even (resp. odd) 
functions in $\cH$, and 
 $P_+$ (resp. $P_-$) be the orthogonal projection onto $\cH_+$ 
(resp.  $\cH_-$).
Let $\ket{n}$ be the  $n$-th normalized eigenvector of $a^* a$, i.e.,
$
  \ket{n} = \frac{1}{\sqrt{n!}} (a^*)^n \ket{0}$
with $\ket{0} = \pi^{-1/4}e^{-x^2/2}$. 
Let ${\mathbb C}\ket {n}$ be the one-dimensional subspace spanned by $\ket{n}$ over ${\mathbb C}$. 
Hence the Wiener-It\^o decomposition $\LR=\bigoplus_{n=0}^\infty {\mathbb C}\ket{n}$ follows. 
The total Hilbert space is 
$$\hhh
\cong
\left\{\left. \vvv{X\\Y}\, \right|\, X,Y\in \bigoplus_{n=0}^\infty {\mathbb C} \ket{n}\right\}
\cong 
\bigoplus_{n=0}^\infty  \hhh_n,\quad 
\hhh_n=\vvv{{\mathbb C}\ket{n}\\{\mathbb C}\ket{n}}.$$
We use this equivalence without noticing. 
Since $a\ket{n}= \sqrt n \ket{n-1}$ and  
$a^*\ket{n}=\sqrt{n+1}\ket{n+1}$, 
we see that 
$aa:\hhh_n\to\hhh_{n-2}$ and 
$\add \add :\hhh_n \to\hhh_{n+2}$. 
Furthermore 
$a^\ast a$ leaves $\hhh_n$  invariant. 
Then we have $Q:\hhh_n\to \hhh_{n-2}\oplus\hhh_n\oplus\hhh_{n+2}$. 
From these observation we can find invariant domains of $Q$. 
We denote  the orthogonal projection 
onto ${\mathbb C}\ket{n}$
 by 
$\ket{n}\!\bra{n}$, and 
define  orthogonal projections on $\hhh$ by
\begin{align}
  P_\uparrow(n) = 
  \begin{pmatrix} \ket{n}\! \bra{n}& 0 \\ 0 &0 \end{pmatrix}, 
  \qquad 
  P_\downarrow(n) = 
  \begin{pmatrix} 0 &0 \\  0 & \ket{n}\! \bra{n}  
  \end{pmatrix}. 
\end{align}
Note that 
$  1 = \sum_{n=0}^\infty (P_\uparrow(n) + P_\downarrow(n))$.
In order to decompose $Q$, we define the following orthogonal projections:
\begin{align*}
%\begin{array}{ll}
 T_{+1} &= \sum_{n=0}^\infty (P_\uparrow(4n) + P_\downarrow(4n+2) ),&
  T_{+2} &= \sum_{n=0}^\infty (P_\downarrow(4n) + P_\uparrow(4n+2)  ),\\
 T_{-1} &= \sum_{n=0}^\infty (P_\uparrow(4n+1) + P_\downarrow(4n+3) ),& 
  T_{-2} &= \sum_{n=0}^\infty (P_\downarrow(4n+1) + P_\uparrow(4n+3) ).
%\end{array}
\end{align*}
Since $\ket{2n}$ is even and $\ket{2n+1}$ is odd, one has
$T_{+1} + T_{+2} = P_+$ and $T_{-1} + T_{-2} = P_-$.
We set $\hhh_{\qq\pp}=\Ran(T_{\qq\pp})$. 
Then $\hhh$ is decomposed as 
\begin{align}
\hhh=\bigoplus_{\qq=\pm,\pp=1,2}\hhh_{\qq\pp}.
\end{align}
 \begin{theorem}
Operator $Q$ is reduced by  
 $\hhh_{\qq\pp}$, $\qq=\pm$, $\pp=1,2$.  
 \end{theorem}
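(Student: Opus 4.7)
The plan is to verify directly that each projection $T_{\qq\pp}$ reduces $Q$, i.e.\ $T_{\qq\pp}D(Q)\subset D(Q)$ and $QT_{\qq\pp}=T_{\qq\pp}Q$ on $D(Q)$. The key input, already set up in the excerpt, is the refined mapping property of $Q$ on the level cells $P_\uparrow(n)\hhh$ and $P_\downarrow(n)\hhh$: the diagonal term $A\otimes(a^*a+\tfrac12)$ leaves every $P_\sigma(n)\hhh$ invariant (since $A$ is diagonal in $\BC^2$ and $a^*a$ preserves $\ket{n}$), while the off-diagonal term $\tfrac12 J\otimes(aa-a^*a^*)$ simultaneously flips the spin index (because $J$ is purely off-diagonal) and shifts the number index by $\pm 2$. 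Thus
\[
Q\,P_\uparrow(n)\hhh\subset P_\uparrow(n)\hhh\oplus P_\downarrow(n-2)\hhh\oplus P_\downarrow(n+2)\hhh,
\]
and similarly with $\uparrow$ and $\downarrow$ exchanged (with the convention that $P_\cdot(k)=0$ for $k<0$).

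Given this, the first main step is a purely combinatorial check that each $\hhh_{\qq\pp}$ is closed under these moves. For $T_{+1}$, $\uparrow$-cells occur at levels $4n$ and $\downarrow$-cells at $4n+2$. Acting by $Q$ on an $\uparrow$-vector at level $4n$ produces an $\uparrow$-vector at level $4n$ (still in $T_{+1}$) and $\downarrow$-vectors at levels $4n\pm 2\in\{4(n-1)+2,\,4n+2\}$ (still in $T_{+1}$); acting on a $\downarrow$-vector at level $4n+2$ produces a $\downarrow$-vector at level $4n+2$ and $\uparrow$-vectors at levels $4n$ and $4(n+1)$, all in $T_{+1}$. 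The same bookkeeping mod $4$ closes off $T_{+2}$, $T_{-1}$, $T_{-2}$; in particular the parity mod $2$ is respected because $a^*a\pm aa\mp a^*a^*$ preserves parity. This is the heart of the claim and is the step that requires care, but it is essentially the tabulation above.

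For the functional-analytic part, let $\mathcal{D}_0$ be the linear span of all $P_\uparrow(n)\hhh\cup P_\downarrow(n)\hhh$, $n\ge 0$. Each $T_{\qq\pp}$ clearly maps $\mathcal{D}_0$ into itself, and by the combinatorial step the identity $QT_{\qq\pp}=T_{\qq\pp}Q$ holds on $\mathcal{D}_0$. Since $\mathcal{D}_0$ consists of finite combinations of Hermite functions tensored with $\BC^2$, it is a core for $Q$. Together with $T_{\qq\pp}$ being a bounded orthogonal projection, a standard limit argument (take $\psi\in D(Q)$, approximate by $\psi_k\in\mathcal{D}_0$ with $\psi_k\to\psi$ and $Q\psi_k\to Q\psi$, and use closedness of $Q$) gives $T_{\qq\pp}\psi\in D(Q)$ and $QT_{\qq\pp}\psi=T_{\qq\pp}Q\psi$, proving that $Q$ is reduced by $\hhh_{\qq\pp}$.

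The only non-routine obstacle is the combinatorial verification in the second paragraph, and even there the pattern is dictated by the fact that $Q$ changes the Hermite level by $0$ or $\pm 2$ and flips the spin whenever it changes the level; once this is laid out, the four invariant subspaces are forced by partitioning the $(\text{spin},\text{level mod }4)$-cells into the four orbits of the moves.
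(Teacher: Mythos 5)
Your proposal is correct and follows essentially the same route as the paper: both rest on the observation that the diagonal part $A(a^*a+\tfrac12)$ preserves each cell $P_\sigma(n)\hhh$ while the off-diagonal part $\tfrac12 J(aa-a^*a^*)$ flips the spin and shifts the level by $\pm2$, and then check (mod $4$) that each $T_{\qq\pp}$ is invariant under these moves, yielding $QT_{\qq\pp}\supset T_{\qq\pp}Q$. Your core-plus-closedness argument merely makes explicit the domain bookkeeping that the paper compresses into the operator inclusions $a^2P_j(n)\supset P_j(n-2)a^2$, etc.
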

%%%%%%%%%%%%%%%%
 \begin{proof}
We see that 
$ a^2 P_j(n) \supset P_j(n-2) a^2$, 
$
\add \add
 P_j(n) \supset P_j(n+2) \add \add$ and 
 $
   a^* a P_j(n) \supset P_j(n) a^* a $ 
   for all $n=0,1,2,\cdots$, 
    and 
$j=\uparrow,\downarrow$.
Clearly
it holds that 
$  A P_j(n) = P_j(n) A$, 
$
  J P_\uparrow(n) = P_\downarrow(n) J$ and 
  $
  J P_\downarrow(n) = P_\uparrow(n) J$.
Then 
$ Q T_{\qq\pp} \supset  T_{\qq\pp} Q$  
and  the theorem  follows. 
  \end{proof}
Let 
us set    $Q_{\qq\pp}=Q\lceil_{\hhh_{\qq\pp}}$.
Then it holds that 
%$Q_+ = Q_{+1} \oplus Q_{+2}$, $Q_- = Q_{-1} \oplus Q_{-2}$ and  
 \begin{align}
Q=\bigoplus_{\qq=\pm,\pp=1,2}Q_{\qq\pp}.
\end{align}

{\tiny
%%%%%%%%%%%%%%%%
 \begin{figure}[ht]
\begin{tabular}[tb]{c|c|c|c|c|c|c|c|c|c|c|c|c|c|c|} \hline
            $n$      & 0&1&  2&3&4&5&6&7&8&9&10&11&12& $\cdots$  \\ \hline
$\uparrow$ &$\bsq$&$\Box$&$\Box$&$\Box$&$\bsq$&$\Box$&$\Box$&$\Box$&$\bsq$&$\Box$&$\Box$&$\Box$&$\bsq$& $\cdots$ \\ \hline
$\downarrow$&$\Box$&$\Box$& $\bsq$& $\Box$&$\Box$&$\Box$&$\bsq$&$\Box$&$\Box$&$\Box$&$\bsq$&$\Box$&$\Box$& $\cdots$ \\ \hline
\end{tabular}
\caption{$\Ran T_{+1}$ is supported on ``$\bsq$''}
%%%%%%%%%%%%%%%%
\begin{tabular}[tb]{c|c|c|c|c|c|c|c|c|c|c|c|c|c|c|} \hline
          $n$        & 0&1&  2&3&4&5&6&7&8&9&10&11&12& $\cdots$  \\ \hline
$\uparrow$ &$\Box$&$\Box$&$\bsq$&$\Box$&$\Box$&$\Box$&$\bsq$&$\Box$&$\Box$&$\Box$&$\bsq$&$\Box$&$\Box$& $\cdots$ \\ \hline
$\downarrow$&$\bsq$&$\Box$& $\Box$& $\Box$&$\bsq$&$\Box$&$\Box$&$\Box$&$\bsq$&$\Box$&$\Box$&$\Box$&$\bsq$& $\cdots$ \\ \hline
\end{tabular}
\caption{$\Ran T_{+2}$ is supported on ``$\bsq$''}
%%%%%%%%%%%%%%%%
\begin{tabular}[tb]{c|c|c|c|c|c|c|c|c|c|c|c|c|c|c|} \hline
      $n$              & 0&1&  2&3&4&5&6&7&8&9&10&11&12& $\cdots$  \\ \hline
$\uparrow$ &$\Box$&$\bsq$&$\Box$&$\Box$&$\Box$&$\bsq$&$\Box$&$\Box$&$\Box$&$\bsq$&$\Box$&$\Box$&$\Box$& $\cdots$ \\ \hline
$\downarrow$&$\Box$&$\Box$& $\Box$& $\bsq$&$\Box$&$\Box$&$\Box$&$\bsq$&$\Box$&$\Box$&$\Box$&$\bsq$&$\Box$& $\cdots$ \\ \hline
\end{tabular}
\caption{$\Ran T_{-1}$ is supported on ``$\bsq$''}
%%%%%%%%%%%%%%%%
\begin{tabular}[tb]{c|c|c|c|c|c|c|c|c|c|c|c|c|c|c|} \hline
  $n$                  & 0&1&  2&3&4&5&6&7&8&9&10&11&12& $\cdots$  \\ \hline
$\uparrow$ &$\Box$&$\Box$&$\Box$&$\bsq$&$\Box$&$\Box$&$\Box$&$\bsq$&$\Box$&$\Box$&$\Box$&$\bsq$&$\Box$& $\cdots$ \\ \hline
$\downarrow$&$\Box$&$\bsq$& $\Box$& $\Box$&$\Box$&$\bsq$&$\Box$&$\Box$&$\Box$&$\bsq$&$\Box$&$\Box$&$\Box$& $\cdots$ \\ \hline
\end{tabular}
\caption{$\Ran T_{-2}$ is supported on ``$\bsq$''}
 \end{figure}
}

\subsection{Jacobi matrix representation of $Q_{\qq\pp}$}{\label{sec.jacobi}}
%In this section, we prove that each $Q_{\qq\pp}$ is unitarily equivalent to some Jacobi matrix.
%Before giving the Jacobi matrix representation, we construct unitary transformations, which make
%the semigroup of transformed $Q_{\qq\pp}$ are positivity improving.
We construct a unitary operator implementing equivalence between $Q_{\qq\pp}$ and a Jacobi matrix. 
Set
\begin{align}
 U_{+1} = \sum_{n=0}^\infty (P_\uparrow(8n) +  P_\downarrow(8n+2))
                        - \sum_{n=0}^\infty (P_\uparrow(8n+4) + P_\downarrow(8n+6)).
\end{align}
This operator  is unitary on $\hhh_{+1}$ and we have 
\begin{align}
 \til{Q}_{+1}
= U_{+1}^{-1} Q_{+1} U_{+1} 
= T_{+1} \left(  A (a^* a + \frac{1}{2}) - \frac{\OO }{2}(aa+\add\add) \right)  T_{+1},
\end{align}
where
$
  \OO  = \begin{pmatrix}      0&1 \\ 1 & 0     \end{pmatrix}$.
In a similar way to $U_{+1}$  one can define the 
unitary operators 
$U_{+2}, U_{-1}$ and $ U_{-2}$
on $\hhh_{+2}$, $\hhh_{-1}$ and  $\hhh_{-2}$, respectively, such that
\begin{align*}
 \til{Q}_{+2}
= U_{+2}^{-1} Q_{+1} U_{+2} 
= T_{+2} \left(  A (a^* a + \frac{1}{2}) - \frac{\OO }{2}(aa+\add\add) \right)  T_{+2},\\
 \til{Q}_{-1}
= U_{-1}^{-1} Q_{-1} U_{-1} 
= T_{-1} \left(  A (a^* a + \frac{1}{2}) - \frac{\OO }{2}(aa+\add\add) \right)  T_{-1},\\
 \til{Q}_{-2}
= U_{-2}^{-1} Q_{-2} U_{-2} 
= T_{-2} \left(  A (a^* a + \frac{1}{2}) - \frac{\OO }{2}(aa+\add\add) \right)  T_{-2}.
\end{align*}
For sequences $a=(a_0, a_1, a_2,\cdots)$ and $b=(b_0,b_1,,b_2,\cdots)$, 
we define the Jacobi matrix
\begin{align}
 J(a,b) = 
\begin{pmatrix}
  b_0 & a_0 &       &            & \text{\Large 0}\\
  a_0 & b_1 & a_1 & \\
        &  a_1 & b_2 &  \ddots & \\
        &        &  \ddots & \ddots & \ddots \\
\text{\Large 0}       & &        &  \ddots & \ddots
\end{pmatrix},
\end{align}
which acts in the set of square summable sequences, $\ell^2:=\ell^2(\mathbb{N}_0)$, where 
$\mathbb{N}_0 = \mathbb{N}\cup\{0\}$.
Set $a_\qq=(a_\qq(0),a_\qq(1),\cdots)$ and $b_{\qq\pp}=(b_{\qq\pp}(0), b_{\qq\pp}(1),\cdots)$, where 
\begin{align*}
& a_+(n) = -\sqrt{(2n+1)(2n+2)}, \qquad a_-(n) = -\sqrt{(2n+2)(2n+3)},  \\
& b_{+1}(n) = \begin{cases} 
	                \alpha(1+4n) \qquad  \text{for even } n\\
	                 \beta(1+4n)  \qquad  \text{for odd } n,
	             \end{cases} \quad
	              b_{+2}(n) = b_{+1}(n)\Big|_{(\alpha,\beta)\to(\beta,\alpha)}, \\
& b_{-1}(n) = \begin{cases} 
	                \alpha(3+4n) \qquad  \text{for even } n\\
	                \beta(3+4n)  \qquad  \text{for odd } n,
	      \end{cases}\quad 
	       b_{-2}(n) = b_{-1}(n)\Big|_{(\alpha,\beta)\to(\beta,\alpha)}.
\end{align*}
For $\qq=\pm$ and $\pp=1,2$, we define  the Jacobi matrix $\widehat Q_{\qq\pp}$ by 
\begin{align}
 \widehat{Q}_{\qq\pp} &= \frac{1}{2} J( a_\qq, b_{\qq\pp}).
 \end{align}
Set $\left\{
\vvv{\ket{4n}\\0}, \vvv{0\\ \ket{4n+2}},n=0,1,2,....\right\}$ 
is a complete orthonormal system of 
$\hhh_{+1}$. 
Let $e_n=(\delta_{n,j})_{j=0}^\infty \in \ell^2$ be the standard basis of $\ell^2$. 
We define the unitary operator $Y_{+1}:\hhh_{+1} \to \ell^2$
by 
$ Y_{+1} \vvv{\ket{4n}\\0} = e_{2n}$ and $ Y_{+1} \vvv{0\\ \ket{4n+2} }= e_{2n+1}$.
Then one can compute the matrix element of $\til{Q}_{+1}$ as $\widehat{Q}_{+1}=Y_{+1}\bar{Q}_{+1}Y_{+1}^{-1}$.
Similarly one can define the unitary transformations such that the following theorem holds.
 \begin{theorem}[Jacobi matrix representations]
For $\qq=\pm, \pp=1,2$, the operators $Q_{\qq\pp}$ are unitarily equivalent to the Jacobi matrix $\widehat{Q}_{\qq\pp}$. 
 \end{theorem}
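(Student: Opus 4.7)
The plan is to prove the theorem by constructing the unitary $Y_{\qq\pp}\colon \hhh_{\qq\pp}\to\ell^2$ explicitly in each of the four sectors and then checking by a direct matrix-element computation that $Y_{\qq\pp}\til Q_{\qq\pp}Y_{\qq\pp}^{-1}=\widehat Q_{\qq\pp}=\tfrac12 J(a_\qq,b_{\qq\pp})$. The work is organized so that the case $(\qq,\pp)=(+,1)$ is carried out in detail, following the pattern already shown in the excerpt, and the remaining three sectors are handled identically by relabelling offsets.

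First, I read off from the definition of $T_{\qq\pp}$ (and from Figures 1--4) a natural orthonormal basis of $\hhh_{\qq\pp}$ obtained by interleaving the two spinor components. For $(\qq,\pp)=(+,1)$ these are the vectors $\vvv{\ket{4n}\\ 0}$ and $\vvv{0\\ \ket{4n+2}}$ for $n\ge 0$, already listed in the excerpt; for the other sectors one uses the analogous basis dictated by the ``$\bsq$'' pattern. Sending these basis vectors to $e_{2n}$ and $e_{2n+1}$ in $\ell^2$ in the order read off from the diagram defines the unitary $Y_{\qq\pp}$.

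Next, the matrix elements of $\til Q_{\qq\pp}$ in this basis reduce to the elementary relations $a^\ast a\ket n=n\ket n$ and $(aa+\add\add)\ket n=\sqrt{n(n-1)}\,\ket{n-2}+\sqrt{(n+1)(n+2)}\,\ket{n+2}$, combined with $A=\diag(\alpha,\beta)$ acting diagonally on $\BC^2$ and $\OO$ swapping the two components. Thus $A(a^\ast a+\tfrac12)$ contributes the diagonal of $\widehat Q_{\qq\pp}$: on $\hhh_{+1}$ the diagonal entries are $\alpha(4n+\tfrac12)$ on $\vvv{\ket{4n}\\ 0}$ and $\beta(4n+\tfrac52)$ on $\vvv{0\\ \ket{4n+2}}$, which after interleaving is exactly $\tfrac12 b_{+1}(k)$. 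The operator $-\tfrac{\OO}{2}(aa+\add\add)$ flips the spinor component and shifts the oscillator index by $\pm 2$, so it produces precisely a tridiagonal coupling between consecutive basis vectors; substituting the coefficients yields $-\tfrac12\sqrt{4n(4n-1)}$ and $-\tfrac12\sqrt{(4n+1)(4n+2)}$ on $\vvv{\ket{4n}\\ 0}$, which coincides with $\tfrac12 a_+(2n-1)$ and $\tfrac12 a_+(2n)$.

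The only real risk is bookkeeping: one must confirm in each of the four sectors that the shifts $n\mapsto n\pm 2$ under $aa$ and $\add\add$ remain inside $\hhh_{\qq\pp}$ (which is immediate from the figures since $\bsq$ on row $\uparrow$ at column $m$ matches $\bsq$ on row $\downarrow$ at columns $m\pm 2$ for the same sector) and that the bijection with $\ell^2$ is compatible with the parity of the index $n$ used in $a_\qq(n)$ and $b_{\qq\pp}(n)$. Because the sectors $(+,2)$, $(-,1)$, $(-,2)$ differ from $(+,1)$ only by $(\alpha,\beta)$-swap and by a uniform shift of oscillator indices by $2$ or $1$, the computation for $(+,1)$ transplants verbatim to each of them with $b_{\qq\pp}$ replaced by its definition; no new difficulty arises, and the theorem follows.
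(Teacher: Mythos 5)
Your proposal is correct and follows essentially the same route as the paper: pass to $\til{Q}_{\qq\pp}=U_{\qq\pp}^{-1}Q_{\qq\pp}U_{\qq\pp}$ (already established in the setup), define the interleaving unitary $Y_{\qq\pp}$ onto $\ell^2$, and verify by direct matrix-element computation that the diagonal gives $\tfrac12 b_{\qq\pp}$ and the $\OO(aa+\add\add)$ term gives the off-diagonal $\tfrac12 a_\qq$, with the other sectors handled by relabelling. Your bookkeeping of the coefficients and index parities is accurate, so no gap remains.
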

 \begin{remark}
In the case of $\alpha=\beta$, $\widehat Q_{\qq1}=\widehat Q_{\qq2}$ for $\qq=\pm$.
Explicitly each $\widehat Q_{\qq\pp}$ is expressed as 
\begin{align*}
\begin{array}{l}
\widehat{Q}_{+1} = \frac{1}{2}
{\tiny
\begin{pmatrix}
\alpha & -\sqrt{1\!\cdot\! 2} &  &  &  &  &   \text{\huge 0}  \\ 
-\sqrt{1\!\cdot\! 2} & 5\beta & -\sqrt{3\!\cdot\! 4} \\
 & -\sqrt{3\!\cdot\! 4} & 9\alpha & -\sqrt{5\!\cdot\! 6} & & &  \\
 &  & -\sqrt{5\!\cdot\! 6} & 13\beta & -\sqrt{7\!\cdot\! 8} \\
 &  & & -\sqrt{7\!\cdot\! 8} & 17\alpha & -\sqrt{9\!\cdot\! 10}  \\
 &  &  &  & -\sqrt{9\!\cdot\! 10} & 21\beta & \ddots \\
 \text{\huge 0} &&&&& \ddots& \ddots 
\end{pmatrix},
}\\
\widehat{Q}_{+2} =\frac{1}{2}
{\tiny 
\begin{pmatrix}
\beta & -\sqrt{1\!\cdot\! 2} &  &  &  &  &   \text{\huge 0}\\
-\sqrt{1\!\cdot\! 2} & 5\alpha & -\sqrt{3\!\cdot\! 4} &  &  &  &  \\
 & -\sqrt{3\!\cdot\! 4} & 9\beta & -\sqrt{5\!\cdot\! 6} &  &  &   \\
 &  & -\sqrt{5\!\cdot\! 6} & 13\alpha & -\sqrt{7\!\cdot\! 8} &  &   \\
 &  &  & -\sqrt{7\!\cdot\! 8} & 17\beta & -\sqrt{9\!\cdot\! 10} &   \\
 &  &  &  & -\sqrt{9\!\cdot\! 10} & 21\alpha & \ddots \\
\text{\huge 0}  & & & & &\ddots & \ddots
\end{pmatrix},
}\\
\widehat{Q}_{-1} =\frac{1}{2}
{\tiny 
\begin{pmatrix}
3\alpha & -\sqrt{2\!\cdot\! 3} &  &  &  &  &    \text{\huge 0} \\
-\sqrt{2\!\cdot\! 3} & 7\beta & -\sqrt{4\!\cdot\! 5} &  &    &  &  \\
 & -\sqrt{4\!\cdot\! 5} & 11\alpha & -\sqrt{6\!\cdot\! 7}   &  &  &  \\
 &  & -\sqrt{6\!\cdot\! 7} & 15\beta & -\sqrt{8\!\cdot\! 9}   &  &  \\
 &  &  & -\sqrt{8\!\cdot\! 9} & 19\alpha & -\sqrt{10\!\cdot\! 11}  &  \\[-0.2cm]
 &  &  &  & -\sqrt{10\!\cdot\! 11} & 23\beta & \ddots \\[-0.08cm]
\text{\huge 0}  & &&&& \ddots& \ddots
\end{pmatrix}
},\\
\widehat{Q}_{-2} = \frac{1}{2}
{\tiny 
\begin{pmatrix}
3\beta & -\sqrt{2\!\cdot\! 3} &  &  &  &  &   \text{\huge 0} \\
-\sqrt{2\!\cdot\! 3} & 7\alpha & -\sqrt{4\!\cdot\! 5}   &  &  &  & \\
 & -\sqrt{4\!\cdot\! 5} & 11\beta & -\sqrt{6\!\cdot\! 7} &    &  & \\
 &  & -\sqrt{6\!\cdot\! 7} & 15\alpha & -\sqrt{8\!\cdot\! 9} &    & \\
 &  &  & -\sqrt{8\!\cdot\! 9} & 19\beta & -\sqrt{10\!\cdot\! 11}   & \\[-0.2cm]
 &  &  &  & -\sqrt{10\!\cdot\! 11} & 23\alpha & \ddots \\[-0.08cm]
\text{\huge 0}  & &&&&\ddots & \ddots
\end{pmatrix}
}.
\end{array}
\end{align*}
\end{remark}
%Now we can state the second main theorem.
\begin{theorem}
\label{42}
Each  eigenvalue of $Q_{\qq\pp}$, $\qq=\pm$,  $\pp=1,2$,  is simple.
%In particular the lowest eigenvalue of $Q_{\qq\pp}$ is simple. 
\end{theorem}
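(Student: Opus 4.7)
The plan is to exploit the Jacobi matrix representation $\widehat{Q}_{\sigma p} = \frac{1}{2} J(a_\sigma, b_{\sigma p})$ established in the previous theorem, and use the classical fact that a Jacobi matrix whose off-diagonal entries never vanish has only simple eigenvalues. Since $Q_{\sigma p}$ is unitarily equivalent to $\widehat{Q}_{\sigma p}$, it suffices to verify this simplicity statement for $\widehat{Q}_{\sigma p}$.

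First I would observe that the off-diagonal entries $a_+(n) = -\sqrt{(2n+1)(2n+2)}$ and $a_-(n) = -\sqrt{(2n+2)(2n+3)}$ are strictly negative for every $n \in \mathbb{N}_0$, hence in particular nonzero. This is the only property of the coefficients that matters for the argument.

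Next, suppose $\psi = (\psi_0, \psi_1, \psi_2, \ldots) \in \ell^2$ is an eigenvector of $\widehat{Q}_{\sigma p}$ with eigenvalue $E$. Writing out the eigenvalue equation componentwise gives the three-term recurrence
\begin{equation*}
 a_\sigma(n-1)\,\psi_{n-1} + b_{\sigma p}(n)\,\psi_n + a_\sigma(n)\,\psi_{n+1} = 2E\,\psi_n, \qquad n \geq 0,
\end{equation*}
with the convention $a_\sigma(-1)\psi_{-1} = 0$. Because $a_\sigma(n) \neq 0$, this recurrence may be solved forward: $\psi_{n+1}$ is uniquely determined by $\psi_n$ and $\psi_{n-1}$. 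Thus the entire sequence $\psi$ is determined by $\psi_0$ alone. Moreover if $\psi_0 = 0$, the $n=0$ equation forces $a_\sigma(0)\psi_1 = 0$, hence $\psi_1 = 0$, and an easy induction yields $\psi \equiv 0$. Therefore any nonzero eigenvector must have $\psi_0 \neq 0$, and any two eigenvectors corresponding to the same $E$ are proportional. This gives simplicity.

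There is essentially no obstacle here — the argument is standard Jacobi-matrix folklore. The only point worth checking carefully is that the off-diagonal sequences $a_\pm$ as listed above are indeed strictly nonzero for all $n \geq 0$, which is immediate by inspection. The entire proof therefore reduces to a one-paragraph application of the forward three-term recurrence after invoking the Jacobi representation from the previous theorem.
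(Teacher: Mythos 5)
Your argument is correct and coincides with the paper's own proof: both pass to the Jacobi representation $\widehat{Q}_{\qq\pp}=\frac12 J(a_\qq,b_{\qq\pp})$, note that the off-diagonal entries $a_\pm(n)$ never vanish, and conclude from the forward three-term recurrence that any eigenvector is determined up to scale by its first component $u_0$. No difference in substance, and no gaps.
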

\begin{proof}
 Let $\lambda$ be any eigenvalue of $\widehat{Q}_{+1}$ with an eigenvector $u = (u_n)_{n=0}^\infty\in \ell^2$.
Then $\lambda$ and $u$ satisfy the recurrence relations:
\begin{align}
& u_{n+1} = a_+ (n)^{-1}\left(  ( \lambda-b_{+1}(n) ) u_n - a_+(n-1) u_{n-1}\right), \quad n \in \BN_0, \label{7.1rec} \\
& u_{-1}=0 . \label{7.2rec}
\end{align}
Note that $a_+(n)\neq 0$.
Solutions of system \eqref{7.1rec}-\eqref{7.2rec} are  uniquely determined  by the term $u_0 \in\BC$.
Hence the multiplicity of any eigenvalue  of $\widehat{Q}_{+1}$ is simple.
Proofs for other cases are similar.
\end{proof}
%%%%%%%%%%%%%%%%
Let $\lambda_{\qq\pp}(n)=\lambda_{\qq\pp}(n,\alpha,\beta)$ be the $n$-th eigenvector of $Q_{\qq\pp}$. 
Then $\{\lambda_{\qq\pp}(n)\}_{n=0}^\infty = Spec(Q_{\qq\pp})$ 
and $\lambda_{\qq\pp}(n) \leq \lambda_{\qq\pp}(n+1)$ for $n=0,1,2,\cdots$.
The following result follows immediately from the above theorem.
 \begin{corollary}
For each $\qq=\pm$ and $\pp=1,2$, eigenvalue-curves 
$$\{ (\alpha,\beta)\mapsto \lambda_{\qq\pp}(n)=\lambda_{\qq\pp}(n,\alpha,\beta), n=0,1,2,3,\cdots\}$$
have no crossing, i.e., for arbitrary $(\alpha,\beta)$ and $n\neq m$, 
$\lambda_{\qq\pp}(n,\alpha,\beta) \neq \lambda_{\qq\pp}(m,\alpha,\beta)$.
 \end{corollary}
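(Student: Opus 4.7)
The plan is to deduce the corollary directly from Theorem \ref{42}. The enumeration $\{\lambda_{\qq\pp}(n)\}_{n=0}^\infty$ is, by the stated conditions ($\{\lambda_{\qq\pp}(n)\}_{n=0}^\infty = Spec(Q_{\qq\pp})$ together with $\lambda_{\qq\pp}(n)\le \lambda_{\qq\pp}(n+1)$), the list of eigenvalues of $Q_{\qq\pp}$ in nondecreasing order repeated according to multiplicity. Under this standard reading, the assertion ``$\lambda_{\qq\pp}(n,\alpha,\beta)\ne \lambda_{\qq\pp}(m,\alpha,\beta)$ for $n\ne m$'' is merely a rephrasing of ``every eigenvalue of $Q_{\qq\pp}$ has multiplicity one.''

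Accordingly, I would argue by contraposition. Fix $(\alpha,\beta)$ and suppose there exist $n<m$ with $\lambda_{\qq\pp}(n,\alpha,\beta)=\lambda_{\qq\pp}(m,\alpha,\beta)=E$. By the monotonicity $\lambda_{\qq\pp}(k)\le \lambda_{\qq\pp}(k+1)$, every intermediate index $k \in \{n, n+1, \ldots, m\}$ also satisfies $\lambda_{\qq\pp}(k,\alpha,\beta)=E$, so $E$ occurs at least $m-n+1\ge 2$ times in the enumeration of $Spec(Q_{\qq\pp})$ counted with multiplicity. Consequently the eigenspace of $Q_{\qq\pp}$ at $E$ has dimension at least two, contradicting the simplicity asserted in Theorem \ref{42}. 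Running this argument for each fixed $(\alpha,\beta)$ yields the claim uniformly in the parameters.

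No substantive obstacle is present. The real work was done in Theorem \ref{42}, where the Jacobi-matrix form of $\widehat{Q}_{\qq\pp}$ reduces the eigenvalue equation to the two-term recurrence (\ref{7.1rec})--(\ref{7.2rec}), whose solutions are entirely determined by the single datum $u_0\in\BC$. The present corollary is a direct linguistic translation of that one-dimensionality into the no-crossing statement for eigenvalue-curves, and the only care required is to respect the convention that eigenvalues are counted with multiplicity in the enumeration $\lambda_{\qq\pp}(n,\alpha,\beta)$.
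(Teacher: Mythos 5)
Your proposal is correct and is exactly the argument the paper intends: the paper gives no proof beyond ``follows immediately from the above theorem,'' and your contrapositive reading (a repeated value in the nondecreasing enumeration would force an eigenvalue of multiplicity at least two, contradicting Theorem \ref{42}) is precisely that immediate deduction.
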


\section{Simplicity of the lowest eigenvalue of  $Q(\alpha,\beta)$}{\label{sec.simplicity}}
In this section, we state  the main theorem in this paper.
%%%%%%%%%%%%%%%%
\begin{theorem}{\label{uniqthm}}
Assume that $\alpha\neq \beta$.
Then the lowest eigenvalue  of $Q(\alpha,\beta)$
is simple  and the ground state is even.
% for all $\alpha,\beta>0$ with $\alpha\beta>1$.
\end{theorem}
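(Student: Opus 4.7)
My plan is to reduce the theorem to two sub-claims: (a) every ground state of $Q$ is even, and (b) once evenness is established, invoke Wakayama's theorem \cite{wakayama:2012} to conclude simplicity of $E$ whenever $\alpha\neq\beta$. By the decomposition $Q=\bigoplus_{\qq=\pm,\pp=1,2}Q_{\qq\pp}$ together with the sector-wise simplicity of Theorem~\ref{42}, we have $E=\min_{\qq,\pp}\lambda_{\qq\pp}(0)$, and the ground-state space of $Q$ is the direct sum of the one-dimensional ground spaces of those $Q_{\qq\pp}$ that attain this minimum. Hence (a) reduces to the strict inequality $\lambda_{+\pp}(0)<\lambda_{-\pp}(0)$ for $\pp=1,2$, while (b) is an immediate application of \cite{wakayama:2012} with no further argument needed.

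For (a) I would combine the two ingredients prepared in Sections~\ref{sec.pio} and~\ref{sec.difference}. From Section~\ref{sec.pio}, after an appropriate sign-change unitary transformation $\bar{Q}_{\qq\pp}$, the semigroup $e^{-t\bar{Q}_{\qq\pp}}$ is positivity improving, so Perron--Frobenius identifies the ground state $u^{(0)}_{\qq\pp}\in\ell^2$ of $\widehat{Q}_{\qq\pp}$ (in the canonical basis) as a strictly positive vector. From Section~\ref{sec.difference} we will have the operator inequality $\widehat{Q}_{-\pp}-\widehat{Q}_{+\pp}\ge\Delta(\alpha,\beta)$. When $\Delta(\alpha,\beta)>0$ this alone is enough: $\lambda_{-\pp}(0)\ge\lambda_{+\pp}(0)+\Delta>\lambda_{+\pp}(0)$. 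In general, inspection of the Jacobi entries shows that each diagonal of $\widehat{Q}_{-\pp}$ exceeds the corresponding diagonal of $\widehat{Q}_{+\pp}$ by the positive constant $\alpha$ or $\beta$; combined with the strict positivity of $u^{(0)}_{+\pp}$ as a trial vector in the min-max principle for $\widehat{Q}_{-\pp}$, this should yield $\lambda_{-\pp}(0)>\lambda_{+\pp}(0)$ strictly.

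I expect the main obstacle to be this last step, because the off-diagonal entries of $\widehat{Q}_{-\pp}$ are also strictly larger in absolute value than those of $\widehat{Q}_{+\pp}$, so the trial-vector argument has two competing effects: the diagonal contribution raises the Rayleigh quotient while the off-diagonal contribution lowers it. It is precisely the positive-cone statement from Section~\ref{sec.pio}, together with a sharp form of the bound in Section~\ref{sec.difference}, that should tilt the balance to a strict inequality uniformly in $(\alpha,\beta)$ subject to $\alpha\neq\beta$ and $\alpha\beta>1$. Once this is achieved, Wakayama's theorem closes the proof, and both the simplicity of $E$ and the even-ness of the ground state are established at once.
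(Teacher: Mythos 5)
Your overall reduction --- prove that every ground state of $Q$ is even, then invoke Wakayama's theorem --- is exactly the paper's strategy, and the step from the sector-wise inequalities $\lambda_{+\pp}(0)<\lambda_{-\pp}(0)$ to evenness is sound. The gap is in how you propose to obtain that strict inequality. The bound $\widehat{Q}_{-\pp}-\widehat{Q}_{+\pp}\ge\Delta(\alpha,\beta)$ of Section \ref{sec.difference} gives $\Delta>0$ only under the extra hypothesis $\sqrt{\alpha\beta}>1+1/1600000000$; it cannot cover the whole region $\alpha\beta>1$, because the off-diagonal Jacobi matrix $F=J((\gamma_n)_{n\ge0},0)$ has entries $\gamma_n\downarrow1$, hence essential spectrum $[-2,2]$ and $\norm{F}\ge2$, so the lower bound $2\sqrt{\alpha\beta}-\norm{F}$ degenerates (and may be negative) as $\sqrt{\alpha\beta}\to1$. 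Your fallback --- feed the strictly positive ground state from Section \ref{sec.pio} into the min-max principle and compare matrix entries --- does not close this: along a strictly positive vector $u$ the diagonal of $\widehat{Q}_{-\pp}-\widehat{Q}_{+\pp}$ contributes $+\tfrac12\sum(2\alpha\text{ or }2\beta)|u_n|^2$, but the off-diagonal contributes $-\sum\gamma_n u_nu_{n+1}$ with $\gamma_n>1$, and you explicitly leave the balance of these competing terms unresolved (``should tilt the balance''). As written, your argument proves the theorem only for $\sqrt{\alpha\beta}>1+1/1600000000$, not for all $\alpha\beta>1$, which is precisely the point of Theorem \ref{uniqthm}.

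The paper does not use Sections \ref{sec.pio} or \ref{sec.difference} for this theorem at all. It works with the parity decomposition $Q=Q_+\oplus Q_-$, $E_\pm=\inf Spec(Q_\pm)$: reflecting an odd ground state $\Phi_-$ of $Q_-$ on $x<0$ produces an even function $\UU_-$ with the same Rayleigh quotient, whence $E_+\le\inner{\UU_-}{Q\UU_-}=E_-$ (Lemma \ref{lem4.2}). If $E_+=E_-$, then $\UU_-$ is itself an eigenvector of $Q$, hence $C^3$ by the Sobolev-regularity lemma; oddness of $\Phi_-$ forces $\UU_-(0)=0$ and evenness forces $\UU_-'(0)=0$, so $\UU_-$ solves the first-order $4\times4$ system \kak{ode512} with zero Cauchy data \kak{ode513} and must vanish identically by ODE uniqueness, contradicting $\norm{\UU_-}=1$. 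This yields $E_+<E_-$ for every $(\alpha,\beta)$ with $\alpha\beta>1$, after which Wakayama's theorem gives simplicity. If you want to rescue your route, you would need a proof of $\lambda_{+\pp}(0)<\lambda_{-\pp}(0)$ valid down to $\alpha\beta=1$; the reflection-plus-ODE-uniqueness argument is how the paper achieves exactly that.
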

%%%%%%%%%%%%%%%%
In order to show 
Theorem \ref{uniqthm} we introduce a remarkable result given by 
Wakayama  \cite{wakayama:2012}. 
 \begin{theorem}{\label{waka}}
Assume that
(1) 
 $\alpha\neq \beta$;
(2) 
 all the ground states of $Q(\alpha,\beta)$ are even, i.e., $\ker(Q(\alpha,\beta)-E) \subset \cH_+$.
 Then the lowest eigenvalue  of $Q(\alpha,\beta)$ is simple.
 \end{theorem}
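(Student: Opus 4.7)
My plan has three steps: a decomposition reduction, an application of Theorem \ref{42} to cut the kernel dimension down to at most two, and then a direct asymmetry argument to exclude the remaining coincidence. The first two steps are straightforward bookkeeping; the third is the substantive work.

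First I would use the identity $T_{+1}+T_{+2}=P_+$ from Section \ref{sect.decom} to write $\cH_+=\hhh_{+1}\oplus\hhh_{+2}$. Since $Q$ is reduced by each $\hhh_{\qq\pp}$ and assumption (2) gives $\ker(Q-E)\subset\cH_+$, we obtain $\ker(Q-E)=\ker(Q_{+1}-E)\oplus\ker(Q_{+2}-E)$. By Theorem \ref{42} each summand is at most one dimensional, so $\dim\ker(Q-E)\le 2$. Simplicity is therefore equivalent to ruling out the coincidence $\lambda_{+1}(0,\alpha,\beta)=\lambda_{+2}(0,\alpha,\beta)$ when $\alpha\ne\beta$.

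Second, from the explicit Jacobi matrices of Section \ref{sec.jacobi}, $\widehat Q_{+1}$ and $\widehat Q_{+2}$ share the off-diagonal sequence $a_+$, while $b_{+2}=b_{+1}|_{\alpha\leftrightarrow\beta}$. Thus $\widehat Q_{+2}(\alpha,\beta)=\widehat Q_{+1}(\beta,\alpha)$, and the coincidence of ground state energies is exactly the $\alpha\leftrightarrow\beta$ symmetry of $\lambda_{+1}(0,\cdot,\cdot)$, which the plan is to disprove for $\alpha\ne\beta$. To exploit this I would invoke Perron--Frobenius positivity: $\widehat Q_{+1}(\alpha,\beta)$ is irreducible with strictly negative off-diagonals, so its ground state $u=(u_n)$ may be chosen with every $u_n>0$, and likewise the ground state $v=(v_n)>0$ of $\widehat Q_{+1}(\beta,\alpha)$. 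Assuming both eigenvalues equal $E$, self-adjointness gives $\langle u,\widehat Q_{+1}(\alpha,\beta)v\rangle=\langle u,\widehat Q_{+1}(\beta,\alpha)v\rangle=E\langle u,v\rangle$. Since the difference $\widehat Q_{+1}(\alpha,\beta)-\widehat Q_{+1}(\beta,\alpha)$ is diagonal with $(n,n)$-entry $(-1)^n(\alpha-\beta)(1+4n)/2$, subtracting forces
\[
\sum_{n=0}^\infty (-1)^n (1+4n)\, u_n v_n = 0.
\]

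The main obstacle is producing a contradiction from this vanishing identity. Because $u_n,v_n>0$, cancellation in the alternating sum is a priori possible, so one must leverage the three-term Jacobi recurrences for $u$ and $v$ to establish a monotonicity or sign property of the sequence $\{(1+4n)u_n v_n\}_{n\ge 0}$ strong enough to force the alternating sum to be strictly positive. I would expect a purely discrete attack on the recurrence to be delicate, since the diagonals $b_{+1}$ and $b_{+2}$ couple $\alpha$ and $\beta$ in opposite parities, so controlling $u_n v_n$ uniformly appears hard. A natural fallback, which is the route taken by Wakayama in \cite{wakayama:2012}, is to recast the eigenvalue equation through the generating function of $(u_n)$ into the associated Heun-type ODE and to exclude the $\alpha\leftrightarrow\beta$ symmetric coincidence using its Riemann scheme and monodromy.
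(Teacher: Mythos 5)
Your first two steps are correct but they are only bookkeeping that the paper's Section \ref{sect.decom} already supplies: since $\cH_+=\hhh_{+1}\oplus\hhh_{+2}$ and $Q$ is reduced by each summand, assumption (2) gives $\ker(Q-E)=\ker(Q_{+1}-E)\oplus\ker(Q_{+2}-E)$, and Theorem \ref{42} bounds each piece by one. Likewise the identification $\widehat Q_{+2}(\alpha,\beta)=\widehat Q_{+1}(\beta,\alpha)$ and the resulting identity $\sum_{n\ge 0}(-1)^n(1+4n)u_nv_n=0$ (with $u,v>0$ entrywise, which is legitimate by Theorem \ref{31} or a Perron--Frobenius argument) are correct, modulo a remark you should add: the difference of the two Jacobi operators is an unbounded diagonal matrix, so you must justify that $v$ lies in $D\bigl(\widehat Q_{+1}(\alpha,\beta)\bigr)$ and that the weighted sum converges absolutely (true, because eigenvectors of these Jacobi operators decay superpolynomially, but it needs saying). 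The trouble is that at this point you have merely restated the theorem: under hypothesis (2), simplicity of $E$ \emph{is} the assertion $\lambda_{+1}(0,\alpha,\beta)\neq\lambda_{+1}(0,\beta,\alpha)$, so the asymmetry you propose to ``disprove'' is the entire content of Theorem \ref{waka}, not a lemma on the way to it.

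The substantive step is therefore missing: you derive no contradiction from the vanishing alternating sum, you acknowledge that a monotonicity property of $(1+4n)u_nv_n$ strong enough to force strict positivity is not available to you, and your fallback is to invoke the generating-function/Heun-ODE/Riemann-scheme analysis of \cite{wakayama:2012} --- which is precisely the proof of the theorem you were asked to prove, so the argument is circular as a proof attempt. As it stands your proposal establishes only $\dim\ker(Q-E)\le 2$ under hypothesis (2), which is weaker than the statement. Note also that the paper itself does not prove Theorem \ref{waka}: it imports it verbatim from \cite{wakayama:2012} and uses it as input to Theorem \ref{uniqthm} (the paper's own contribution is to verify hypothesis (2), i.e.\ $E_+<E_-$, for all $\alpha\neq\beta$). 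So either cite Wakayama outright, as the paper does, or actually carry out the ODE/monodromy argument (or a genuinely new discrete argument that controls the alternating sum); the present sketch does neither.
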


%To prove Theorem \ref{uniqthm}, prepare the next lemma.
%%%%%%%%%%%%%%%%
Let $Q_\qq =Q_{\qq1}\oplus Q_{\qq2}$, $\qq=+,-$. Then 
$Q$ is decomposed into the direct sum of even part and odd part, $Q=Q_+\oplus Q_-$.
Let $E_\qq=\inf Spec(Q_\qq)$. 
\begin{lemma}
Let $u=\vvv{u_1\\  u_2}$ be an eigenvector of $Q$. Then 
$u_j\in C^3(\RR)$ for $j=1,2$.
\end{lemma}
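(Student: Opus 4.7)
The plan is to argue by elliptic bootstrap on the coupled ODE system obtained from the eigenvalue equation $Qu = Eu$. Writing this equation componentwise, with $A=\diag(\alpha,\beta)$ and the given $J$, yields
\begin{align*}
u_1'' &= x^2 u_1 - \tfrac{2E}{\alpha} u_1 - \tfrac{2}{\alpha}\bigl(xu_2' + \tfrac{1}{2} u_2\bigr),\\
u_2'' &= x^2 u_2 - \tfrac{2E}{\beta} u_2 + \tfrac{2}{\beta}\bigl(xu_1' + \tfrac{1}{2} u_1\bigr),
\end{align*}
so each $u_j''$ is expressed as a polynomial-coefficient combination of $u_1, u_2, u_1', u_2'$. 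The idea is to use this formula iteratively, gaining one derivative of regularity at each pass.

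First I would set up the initial regularity. Since $u$ is an eigenvector, $u\in D(Q) = \mathbb{C}^2\otimes (D(d^2/dx^2)\cap D(x^2))$, so each $u_j$ belongs to $L^2(\RR)$ together with its distributional second derivative, i.e.\ $u_j\in H^2(\RR)$. The one-dimensional Sobolev embedding $H^2(\RR)\hookrightarrow C^1(\RR)$ then gives $u_j, u_j'\in C^0(\RR)$.

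Next, I would plug this into the two displayed equations. Each right-hand side is a sum of products of polynomials in $x$ with functions in $C^0(\RR)$, hence belongs to $C^0(\RR)$. Therefore $u_j''\in C^0(\RR)$, upgrading $u_j$ to $C^2(\RR)$. Performing the iteration once more, I would differentiate the equations classically (which is now justified because the right-hand sides are $C^1$: the terms $x^2 u_j$, $u_j$, $x u_k'$ and $u_k$ are all $C^1$ once $u_1,u_2\in C^2$). This yields an explicit formula for $u_j'''$ in terms of $u_1,u_2,u_1',u_2',u_1'',u_2''$ and polynomial coefficients, each of which lies in $C^0(\RR)$. Hence $u_j'''\in C^0(\RR)$, i.e.\ $u_j\in C^3(\RR)$.

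There is no real obstacle: the only delicate point is the justification of classical differentiation when passing from $C^2$ to $C^3$ regularity, and this is immediate once the previous step has established that the equation holds pointwise with all relevant terms in $C^1$. (In fact, the same argument iterated indefinitely would give $u_j\in C^\infty(\RR)$, but $C^3$ suffices for the subsequent analysis.)
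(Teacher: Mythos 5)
Your argument is correct: the starting regularity $u_j\in H^2(\RR)$ coming from $u\in D(Q)=\mathbb{C}^2\otimes(D(d^2/dx^2)\cap D(x^2))$, the embedding $H^2(\RR)\hookrightarrow C^1(\RR)$, and the two passes through the componentwise second-order system are all sound, the only implicit point being the standard fact that a continuous function whose distributional derivative is continuous is classically $C^1$ (needed to pass from the distributional identity for $u_j''$ to classical derivatives). The route differs from the paper's mainly in organization. The paper does not go componentwise and does not bootstrap classically: it applies the eigenvalue equation to itself once, obtaining a fourth-order distributional equation whose right-hand side (built from $u$, $u'$, $u''$ with polynomial coefficients) lies in $L^2_{\mathrm{loc}}$ because $x^2u_j,\,d^2u_j/dx^2\in\LR$; this gives $u_j\in W^{4,2}_{\mathrm{loc}}(\RR)$ in one shot, and a single application of the one-dimensional Sobolev embedding then yields $C^3$. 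You instead spend the Sobolev embedding at the beginning ($H^2\subset C^1$) and then gain one classical derivative per pass through the second-order system. Your version avoids the computation with the squared operator $\bigl\{\lambda A^{-1}-\tfrac{x^2}{2}-A^{-1}J(x\tfrac{d}{dx}+\tfrac12)\bigr\}^2$ and is arguably more elementary and transparent, and as you note it iterates to $C^\infty$; the paper's version stays entirely at the level of Sobolev regularity, which is marginally slicker if one only wants the $W^{4,2}_{\mathrm{loc}}\subset C^3$ conclusion in a single step. Either proof serves the later application (the unique-continuation argument for the ODE system) equally well.
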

\proof
Let $u=\vvv{u_1\\  u_2}$ be an eigenvector of $Q$ with eigenvalue $\lambda$:
\begin{align}
\label{ev}
 A \left(-\frac{1}{2}\frac{d^2}{dx^2}+\frac{1}{2}x^2 \right) u
      + J \left(x\frac{d}{dx}+\frac{1}{2}\right)  u = \lambda u.
\end{align}
From the eigenvalue equation \kak{ev} we can directly see that 
\begin{align}
& \frac{1}{4} \frac{d^4}{dx^4} u 
 =
 \left( -\frac{1}{2}\frac{d^2}{dx^2} \right)
  \left\{ \lambda A^{-1} -\frac{x^2}{2}-A^{-1}J\left(x \frac{d}{dx}+ \frac{1}{2} \right) \right\} u \nonumber \\
%& = \frac{1}{2}\left[ \frac{d^2}{dx^2}, \frac{x^2}{2} + A^{-1}J x \frac{d}{dx} \right]
   %  u + \left\{ \lambda  A^{-1} -\frac{x^2}{2}-A^{-1}J\left(x \frac{d}{dx}+\frac{1}{2}\right) \right\}^2 
%    u \\
& \label{sobolev}
= \left(x\frac{d}{dx}+\frac{1}{2} \right) u+A^{-1}J \frac{d^2}{dx^2} u
     + \left\{ \lambda  A^{-1} -\frac{x^2}{2}-A^{-1}J\left(x \frac{d}{dx}+\frac{1}{2}\right) \right\}^2 u
\end{align}
in the sense of distribution. 
Note that $u_j\in D(x^2)\cap D(d^2/dx^2)$. 
Since $x^2 u_j, d^2 u_j/dx^2\in \LR$, 
we see that $ u_j \in W_\mathrm{loc}^{4,2}(\BR)$ for $j=1,2$ by \kak{sobolev}.
By the Sobolev embedding theorem, $ u_1, u_2 \in C^3(\BR)$ follows. 
\qed

\begin{lemma}{\label{lem4.2}} 
It follows that $E_+ \leq E_-$.
\end{lemma}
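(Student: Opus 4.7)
The plan is to apply the variational principle with a carefully chosen test vector in $\cH_+$ built from a ground state of $Q_-$. Since $Q$ has real coefficients, its eigenspaces contain real eigenvectors; I would take $v=\binom{v_1}{v_2}\in \cH_-$ a real normalized eigenvector of $Q$ with eigenvalue $E_-$ (which exists by discreteness of $\mathrm{Spec}(Q_-)$). The preceding regularity lemma gives $v_1,v_2\in C^3(\BR)$, and oddness forces $v_j(0)=0$. Then I define the componentwise even symmetrization
\[
 w_j(x):= v_j(|x|),\qquad j=1,2,
\]
so that $w=\binom{w_1}{w_2}\in \cH_+$. Each $w_j$ is continuous with $w_j(0)=0$, lies in $H^1(\BR)\cap L^2(x^2\,dx)$, and has weak derivative $w_j'(x)=v_j'(|x|)\operatorname{sgn}(x)$, so $w$ belongs to the form domain of $Q$.

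The heart of the argument is to show $(w,Qw)=(v,Qv)=E_-$, which I would do by splitting $Q$ into its two summands. For the piece $A\otimes(-\tfrac12 \partial_x^2+\tfrac12 x^2)$ I use the pointwise identities $w_j(x)^2=v_j(x)^2$ and $w_j'(x)^2=v_j'(x)^2$, both of which follow from $v_j$ being odd (and hence $v_j'$ even); integrating gives $\|w_j\|=\|v_j\|$, $\|x w_j\|=\|xv_j\|$, and $\|w_j'\|=\|v_j'\|$, so this summand contributes the same form value on $w$ as on $v$. For $J\otimes(x\partial_x+\tfrac12)$ a short calculation gives, for real $u$,
\[
 \bigl(u,\, J(x\partial_x+\tfrac12)u\bigr) = \int_{\BR} x\, W(u_1,u_2)(x)\, dx,\qquad W(u_1,u_2):=u_2 u_1'-u_1 u_2',
\]
and the identity $W(w_1,w_2)(x)=\operatorname{sgn}(x)\,W(v_1,v_2)(|x|)$ turns the $w$-integrand into $|x|\,W(v_1,v_2)(|x|)$. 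Oddness of each $v_j$ (and evenness of each $v_j'$) makes $W(v_1,v_2)$ odd in $x$, so $x\,W(v_1,v_2)$ is even, and both integrals reduce to $2\int_0^\infty x\,W(v_1,v_2)(x)\, dx$. Summing the two contributions yields $(w,Qw)=(v,Qv)=E_-$, and the variational principle gives $E_+\leq (w,Qw)/\|w\|^2=E_-$.

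The main obstacle I anticipate is that $w$ generically fails to lie in the operator domain $D(Q)$: whenever $v_j'(0)\neq 0$, the function $w_j$ has a corner at the origin, so its distributional second derivative contains a $\delta_0$ term. The variational estimate therefore has to be applied at the level of the closed quadratic form of $Q$, whose domain one identifies componentwise as $H^1(\BR)\cap L^2(x^2\,dx)$. I would verify that the pointwise manipulations above faithfully compute this form on $w$; alternatively one may mollify $w$ to produce a sequence in $D(Q)$ whose Rayleigh quotients converge to $E_-$, but the form-level argument is cleaner.
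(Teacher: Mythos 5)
Your proof is correct and is essentially the paper's argument: the paper likewise takes a normalized ground state $\Phi_-$ of $Q_-$, forms its componentwise even reflection $\UU_{-j}(x)=\Phi_{-j}(|x|)$, checks that the relevant quadratic quantities (the $\frac{d}{dx}$-norms and the $x\frac{d}{dx}$ cross terms) are unchanged, and concludes $E_+\leq \inner{\UU_-}{Q\UU_-}=\inner{\Phi_-}{Q\Phi_-}=E_-$ by the variational principle. Your explicit treatment of the corner at $x=0$ (working with the closed quadratic form rather than the operator domain when $v_j'(0)\neq 0$) is a welcome refinement of the paper's bald assertion that $\UU_{-j}\in D(-d^2/dx^2)$; apart from that, the two routes coincide.
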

%%%%%%%%%%%%%%%%
 \begin{proof}
 Let $\Phi_- = \vvv{\Phi_{-1}\\ \Phi_{-2}}$ be a
normalized ground state of $Q_-$. 
Note that $\Phi_{-j }$, $j =1,2$,  are odd functions.
We define  even functions $\UU _- \in \cH_+$ by 
\begin{align*}
&  \UU _- =  \vvv {\UU  _{-1}\\  \UU  _{-2}}, \quad 
\UU  _{-j}(x) = \begin{cases}
			 \Phi_{-j }(x), \quad \text{if} ~ x\geq 0, \\
			-\Phi_{-j }(x), \quad \text{if} ~ x <  0. \\
			\end{cases}
\end{align*}
Note that $\UU _{-j } \in D(-d^2/dx^2)$ and 
\begin{align*}
& \norm{(d/dx)\UU _{-j}}^2 = \norm{(d/dx)\Phi_{-j}}^2,  \quad 
 \inner{\UU  _{-j '}}{ x \frac{d}{dx} \UU  _{-j }} = 
  \inner{ \UU_{-j '} }{ x \frac{d}{dx}  \UU_{-j } },
 \quad j',j  = 1,2.
 \end{align*}
%see \cite[Theorem 6.17]{Lieb-Loss:2001}.
Thus one has
\begin{align}
\label{58a}
E_+ \leq  \inner{\UU _-}{Q\UU _-} = \inner{\Phi_-}{Q\Phi_-}= E_-.
\end{align}
Therefore $E_+\leq E_-$ follows. 
\qed

\begin{lemma}
It follows that $E_+<E_-$.
\end{lemma}
\proof
Assume that $E_+=E_-$. 
Then by \eqref{58a}  we have
$E_+ = \inner{\UU _-}{Q\UU _-}$,
 which implies that 
$\UU  _-$ is a ground state of $Q_+$.
In other words, $\UU _-$ is an eigenvector of $Q$ with eigenvalue $E_+$.
Thus $\UU_{-j},  \UU  _{-j} \in C^3(\BR) $ for $j=1,2$. 
We normalize $\UU$ as $\|\UU\|=1$. 
From the fact that $\Phi_{-j}$ is odd (resp. $\UU  _{-j}$ is even), 
it follows that 
$\Phi_-(0)=\vvv {0\\ 0}=\UU_-(0)$ (resp. $\frac{d}{dx} \UU  _{-j}(0)=\vvv{0\\0}$).
Therefore $\UU_{-j}$ satisfies the  ordinary differential equations:
\begin{align}
 \label{ode512}
&\frac{d}{dx}
\vvv{
 \UU_{-1} \\ \UU_{-2} \\ \UU'_{-1} \\ \UU  '_{-2}
}
=
\left(
\!\!\!
\begin{array}{cccc}
0&0&1&0\\
0&0&0&1\\
x^2+\frac{2E_+}{\alpha}&-\frac{1}{\alpha}&0&-\frac{2x}{\alpha}\\
\frac{1}{\beta}& x^2-\frac{2E_+}{\beta}&\frac{2x}{\beta}& 0
\end{array}
\!\!\!
\right)
\vvv{
 \UU_{-1} \\ \UU_{-2} \\ \UU'_{-1} \\ \UU  '_{-2}
}
\\
&\label{ode513}
\vvv{
 \UU  _{-1} (0)\\ \UU  _{-2}(0) \\ \UU  '_{-1} (0)\\ \UU  '_{-2}(0)
}
=
\vvv{
0\\ 0 \\ 0 \\ 0
}.
\end{align}
Since the right hand side of  \kak{ode512} is smooth in $(\UU  _{-1},\UU  _{-2}, \UU  '_{-1}, \UU  '_{-2},x)$, the differential equation \eqref{ode512} with 
initial condition \kak{ode513} 
has the  unique solution
$\vvv{
 \UU  _{-1} (x)\\ \UU  _{-2}(x) \\ \UU  '_{-1} (x)\\ \UU  '_{-2}(x)
}
=
\vvv{
0\\ 0 \\ 0 \\ 0
}$, 
which contradicts $\norm{\UU  _-}=1$.
Therefore, $E_+ < E_-$.
 \end{proof}
\begin{proof}[Proof of Theorem \ref{uniqthm}] 
Assume that $\alpha\neq \beta$.
By Proposition \ref{waka}, it is enough to show 
that 
$ \ker(Q-E)\subset \cH_+$.
By Lemma \ref{lem4.2}, we have $E_+ < E_-$. 
Hence all the ground states are  even.
%The multiplicity of the lowest eigenvalue  
%of $Q_+ = Q_{+1} \oplus Q_{+2}$ is at most two, 
%since each of $Q_{+1}$ and $Q_{+2}$ has the simple lowest eigenvalue  by Theorem \ref{42}. 
Therefore the theorem follows.
%the lowest eigenvalue of  $Q$ is simple by Proposition \ref{waka}.
 \end{proof}

\section{Positivity of ground state}
%improving semigroup and positive ground state}
{\label{sec.pio}}
%In this section, we prove the semigroup $e^{-t\bar{Q}_{\qq\pp}}, \qq=\pm,\pp=1,2$ are positivity improving 
%semigroup. As a consequence, the ground states of $\bar{Q}_{\qq,\pp}, \qq=\pm, \pp=1,2$ are strictly positive.
Let 
\begin{align*}
&\con=\left\{\left.
\sum_{n=0}^\infty a_n \vvv{\ket{4n}\\0}
 + \sum_{n=0}^\infty b_n 
\vvv{0\\ \ket{4n+2}}
\right|a_n>0, b_n>0, n\geq 0\right\},\\
&
\conn=\left\{\left.
\sum_{n=0}^\infty a_n \vvv{\ket{4n}\\ 0} + \sum_{n=0}^\infty b_n \vvv{ 0\\ \ket{4n+2}}\right|a_n\geq 0, b_n\geq 0, n\geq 0\right\}.
\end{align*}
Then $\con$ is  a positive cone of $\hhh_{+1}$ and 
$\conn$  a non-negative 
 cone of $\hhh_{+1}$. 
%Thus $\RR_+ \con \subset \con$ and $\RR_+^0\conn \subset \conn$.
We say that $\Psi$ is non-negative and  is denoted by  $\Psi \geq 0$ 
if and only if 
$\Psi\in\conn$, 
and 
is strictly positive and is denoted by $\Psi>0$ if and only if  $\Psi\in\con$.
%Similarly, we define the positiveness for vectors in $\Ran(T_{+2})$, $\Ran(T_{-1})$
%and $\Ran(T_{-2})$.
%%%%%%%%%%%%%%%%
A bounded operator $T$ on $\hhh_{+1}$ is positivity preserving if and only if 
$T\conn\subset \conn$, and positivity improving if and only if $T\conn\subset\con$. 
\begin{proposition}
\label{faris}
Suppose that a bounded self-adjoint operator $T$ is positivity improving on $\hhh_{\qq\pp}$ and 
$\|T\|$ is an eigenvalue. Then the multiplicity of $\|T\|$ is simple and the corresponding
eigenvector is strictly positive.
\end{proposition}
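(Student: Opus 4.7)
The plan is a standard Perron--Frobenius type argument tailored to the cone structure of $\hhh_{\qq\pp}$. First I would reformulate positivity improvement in coordinates: writing every $\Psi \in \hhh_{\qq\pp}$ in the canonical basis $\{e_n\}$ (e.g.\ $\{\vvv{\ket{4n}\\0}, \vvv{0\\ \ket{4n+2}}\}_{n\geq 0}$ for $\qq\pp=+1$) as $\Psi = \sum_n c_n e_n$, positivity improvement of the self-adjoint $T$ amounts to all matrix elements $T_{mn} := (e_m, T e_n)$ being strictly positive. Setting $|\Psi| := \sum_n |c_n| e_n \in \conn$, which satisfies $\||\Psi|\| = \|\Psi\|$, the triangle inequality yields
\begin{equation}
|(\Psi, T\Psi)| \leq (|\Psi|, T|\Psi|), \label{eq:triangleT}
\end{equation}
with equality only when the phases of the nonzero coefficients $c_n$ all coincide.

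Next, let $\Psi$ be a normalized eigenvector with $T\Psi = \|T\|\Psi$, so that $(\Psi, T\Psi) = \|T\|$. The chain
\[
\|T\| = (\Psi, T\Psi) = |(\Psi, T\Psi)| \leq (|\Psi|, T|\Psi|) \leq \|T\|\,\||\Psi|\|^2 = \|T\|
\]
forces $(|\Psi|, T|\Psi|) = \|T\|$. Since $T \leq \|T\|$ as a self-adjoint operator, this in turn gives $T|\Psi| = \|T\||\Psi|$; and because $|\Psi| \in \conn \setminus \{0\}$ while $T$ is positivity improving, $|\Psi| = \|T\|^{-1} T|\Psi| \in \con$, i.e.\ $|\Psi| > 0$.

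The equality case of \eqref{eq:triangleT} then pins down the phase of $\Psi$: with every $T_{mn} > 0$, equality in $|\sum_{m,n} \bar c_m T_{mn} c_n| = \sum_{m,n} |c_m| T_{mn} |c_n|$ forces $\bar c_m c_n = |c_m||c_n|$ for every pair, hence a common phase $e^{i\theta}$, so $\Psi = e^{i\theta}|\Psi|$. Thus every eigenvector of eigenvalue $\|T\|$ is a unimodular multiple of the same strictly positive vector.

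Finally, for the simplicity assertion I would argue by contradiction: two linearly independent eigenvectors could be orthogonalized into $\Psi_1 \perp \Psi_2$, but by the previous paragraph each takes the form $\Psi_j = e^{i\theta_j}|\Psi_j|$ with $|\Psi_j| \in \con$; then $(\Psi_1, \Psi_2) = e^{i(\theta_2 - \theta_1)}(|\Psi_1|, |\Psi_2|)$ and the inner product of two strictly positive vectors is strictly positive, a contradiction. The main delicate point is the careful bookkeeping for \eqref{eq:triangleT} and its equality case from the coordinate description of $\conn$ and the strict positivity of every $T_{mn}$; once this is in place the remainder is mechanical.
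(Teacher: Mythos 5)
Your argument is correct, but it is genuinely different in character from what the paper does: the paper offers no proof at all for Proposition \ref{faris} and simply cites Faris's 1972 theorem on invariant cones, whereas you reconstruct the classical Perron--Frobenius argument from scratch, exploiting the fact that $\conn$ is the cone of nonnegative coordinate sequences with respect to the explicit orthonormal basis of $\hhh_{\qq\pp}$, so that positivity improvement reduces to strict positivity of all matrix elements $T_{mn}$. The citation buys brevity and the full generality of Faris's cone-theoretic framework; your route buys a self-contained, elementary proof whose key steps (the inequality $|\inner{\Psi}{T\Psi}|\leq\inner{|\Psi|}{T|\Psi|}$, the deduction $T|\Psi|=\|T\|\,|\Psi|$ from $\|T\|-T\geq 0$ via the square-root trick, the phase rigidity from equality in the triangle inequality, and the impossibility of two orthogonal strictly positive vectors) are exactly the mechanism behind the quoted theorem. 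A few small points you should make explicit to be airtight: $\|T\|>0$ (a positivity improving operator is nonzero); the double sums $\sum_{m,n}\bar c_m T_{mn}c_n$ and $\sum_{m,n}|c_m|T_{mn}|c_n|$ converge absolutely (the second by Tonelli together with $|\Psi|,T|\Psi|\in\ell^2$-type membership), which is what legitimizes termwise comparison and the equality analysis; the phase argument needs all $c_n\neq 0$, which you correctly obtain first from $|\Psi|\in\con$, so keep that order; and in the final step note that Gram--Schmidt applied to two eigenvectors for $\|T\|$ stays inside the eigenspace, so $\Psi_1,\Psi_2$ are again eigenvectors and the contradiction $\inner{\Psi_1}{\Psi_2}\neq 0$ applies.
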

\begin{proof}
See \cite{Faris:1972}.
\end{proof}

\begin{theorem}{\label{31}}
For all $t>0$,  $\qq=\pm$ and $\pp=1,2$, 
$e^{-t\til{Q}_{\qq\pp}}$
is positivity improving on $\hhh_{\qq\pp}$. 
In particular, the lowest eigenvalue  of $Q_{\qq\pp}$ is simple and corresponding eigenvector
is strictly positive.
\end{theorem}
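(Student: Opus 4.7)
The plan is to translate the problem to the Jacobi matrix $\widehat{Q}_{\qq\pp}$ on $\ell^2(\mathbb{N}_0)$ via the unitaries $Y_{\qq\pp}$ from Section \ref{sec.jacobi}; under this identification, the cone $\con$ (and its analogues on $\hhh_{\qq\pp}$) corresponds to the componentwise-positive sequences in $\ell^2$. The structural input is that $\widehat{Q}_{\qq\pp}$ is tridiagonal with strictly negative off-diagonal entries $a_\qq(n)/2$ and strictly positive diagonal entries $b_{\qq\pp}(n)/2$, so it suffices to show $f_{mn}(t):=(e_m,e^{-t\widehat{Q}_{\qq\pp}}e_n)>0$ for every $m,n\in\mathbb{N}_0$ and $t>0$.

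First I would establish non-negativity by finite-rank truncation. Let $\widehat{Q}^{(N)}$ be the compression of $\widehat{Q}_{\qq\pp}$ to $\mathrm{span}\{e_0,\ldots,e_N\}$, and pick $c_N>\max_{n\leq N}b_{\qq\pp}(n)/2$. Then $c_NI-\widehat{Q}^{(N)}$ has entirely non-negative entries with strictly positive off-diagonals, hence is an irreducible non-negative matrix; the classical Perron--Frobenius theorem gives $e^{t(c_NI-\widehat{Q}^{(N)})}$, and therefore $e^{-t\widehat{Q}^{(N)}}$, entrywise strictly positive for every $t>0$. Because the finitely supported sequences form a common core, $\widehat{Q}^{(N)}\to\widehat{Q}_{\qq\pp}$ in the strong resolvent sense, hence $e^{-t\widehat{Q}^{(N)}}\to e^{-t\widehat{Q}_{\qq\pp}}$ strongly; passing to the limit yields $f_{mn}(t)\geq 0$.

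The main obstacle is upgrading $\geq 0$ to $>0$, since strong limits of strictly positive numbers can vanish. I would use two ingredients. First, each $e_n$ is an analytic vector of $\widehat{Q}_{\qq\pp}$ in a neighborhood of $0$: the linear growth of the coefficients gives $\|\widehat{Q}_{\qq\pp}^k e_n\|\leq C^k(n+k)^k$, so the power series $f_{mn}(t)=\sum_k (-t)^k(e_m,\widehat{Q}_{\qq\pp}^k e_n)/k!$ converges for small $t>0$. The tridiagonal structure forces $(e_m,\widehat{Q}_{\qq\pp}^k e_n)=0$ for $k<|m-n|$, while for $k=|m-n|$ the only surviving term is the product of the $|m-n|$ strictly negative off-diagonal entries along the unique minimal path, giving $(-1)^{|m-n|}(e_m,\widehat{Q}_{\qq\pp}^{|m-n|} e_n)>0$. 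Combined with the factor $(-t)^{|m-n|}$, the leading coefficient of $f_{mn}$ is strictly positive, so $f_{mn}(t)>0$ for all sufficiently small $t>0$. Second, for every $s>0$, $f_{nn}(s)=\int e^{-s\lambda}\,d\mu_{nn}(\lambda)>0$, where $\mu_{nn}$ is the unit-mass spectral measure of $e_n$. Plugging into the semigroup identity $f_{mn}(t)=\sum_\ell f_{m\ell}(t-s)f_{\ell n}(s)$, whose summands are all $\geq 0$: if $f_{mn}(t_0)=0$, each summand must vanish; taking $\ell=n$ and using $f_{nn}(s)>0$ forces $f_{mn}(t_0-s)=0$ for every $s\in(0,t_0)$, contradicting the strict positivity near $0$. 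Hence $f_{mn}(t)>0$ for all $t>0$, so $e^{-t\widehat{Q}_{\qq\pp}}$ is positivity improving.

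For the ``In particular'' assertion, $\widehat{Q}_{\qq\pp}$ is self-adjoint and bounded below, so $e^{-t\widehat{Q}_{\qq\pp}}$ is a bounded self-adjoint positivity improving operator whose operator norm equals $e^{-tE_0}$ and is attained at the lowest eigenvalue $E_0$ of $\widehat{Q}_{\qq\pp}$ by the spectral theorem (recall that $Q_{\qq\pp}$ has purely discrete spectrum). Proposition \ref{faris} then yields simplicity of this norm eigenvalue and strict positivity of the corresponding eigenvector, which the unitary equivalence transports back to the claimed simplicity and positivity statement for $Q_{\qq\pp}$.
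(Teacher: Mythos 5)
Your argument is correct, but it follows a genuinely different route from the paper. The paper stays on $\hhh_{\qq\pp}$ and splits $\til{Q}_{\qq\pp}=H_0-V$ into the diagonal part $H_0$ and the off-diagonal part $V=\frac{\OO}{2}(aa+\add\add)T_{\qq\pp}$: positivity preservation comes from the Trotter--Kato product formula ($e^{-tH_0}$ preserves the cone, $e^{tV}$ maps $\conn$ into $\con$ via the analytic-vector series), and the upgrade to positivity improving is done by a direct quantitative lower bound on $\inner{v'}{e^{-t\til{Q}_{+1}}v}$, inserting projections into the Trotter product and retaining a single binomial term $\frac{t^\ell}{\ell!}\inner{v'}{(a^*)^{4k'-4k}v}>0$. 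You instead pass entirely to the Jacobi-matrix picture and use only its sign structure: finite truncations plus the entrywise positivity of $e^{tM}$ for irreducible nonnegative $M$, together with strong resolvent (hence semigroup) convergence, give nonnegativity of all matrix elements; strict positivity then follows from the analytic-vector power series, whose leading coefficient at order $|m-n|$ is the product of the strictly negative off-diagonal entries, combined with the Chapman--Kolmogorov identity and $f_{nn}(s)>0$ to propagate positivity from small $t$ to all $t$. Your route is more modular and applies verbatim to any semibounded Jacobi matrix with nonvanishing (negative) off-diagonal entries, at the cost of a few approximation-theoretic facts you should state explicitly: that the finitely supported sequences are a core for $\widehat{Q}_{\qq\pp}$ (true, since finite spans of the Hermite basis form a core for $Q$ compatible with the reducing decomposition) and that the truncations are uniformly bounded below (they are, being compressions of the positive operator $\widehat{Q}_{\qq\pp}$), so that strong resolvent convergence really yields strong convergence of the semigroups; the paper's argument avoids any limit of operators but is tied to the explicit creation/annihilation structure. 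Both proofs conclude identically via Proposition \ref{faris}, using that the lowest eigenvalue exists because the spectrum is discrete.
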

 \begin{proof}
We prove the theorem only for the case of $\qq=+$ and $\pp=1$.
For  other cases the proof is similar and is left to readers.  
We  shall below show that $e^{-t\til{Q}_{+1}}$ is positivity improving.
We define
\begin{align}
  H_0 = A (a^* a + \frac{1}{2}) T_{+1} , \qquad 
  V =    \frac{\OO }{2}(aa+\add\add) T_{+1}.
\end{align}
Note that $\til{Q}_{+1} = H_0-V$.
Since $a\ket{n}= \sqrt{n}\ket{n-1}$ and $a^* \ket{n} = \sqrt{n+1}\ket{n+1}$, 
and 
$H_0$ is the multiplication by  
$\alpha(n+\frac{1}{2})$,
we see that 
$e^{-tH_0}$ is  positivity preserving.
Since  $\vvv{\ket{4n}\\ 0}$ and $\vvv{0\\ \ket{4n+2}}$ are 
analytic vectors of $V$, 
we see that 
\begin{align}
  e^{tV}\vvv{\ket{4n}\\ 0} &= \sum_{j=0}^\infty \frac{t^j}{j!} 
  (aa+\add\add)^j \left(\frac{\OO}{2}\right)^j \vvv{\ket{4n}\\ 0} \in\con , \\
  e^{tV}\vvv{0\\ \ket{4n+2}} &= \sum_{j=0}^\infty \frac{t^j}{j!} 
  (aa+\add\add) ^j \left(\frac{\OO}{2}\right)^j \vvv{0\\ \ket{4n+2}} \in\con.
\end{align}
From this 
$e^{tV}\conn\subset\con$ follows.  
Let $\Psi,\Phi \in \conn$.
By  the Trotter-Kato product formula, we have
\begin{align}
 \inner{\Psi}{e^{-t\til{Q}_{+1}}\Phi} = \lim_{j\to \infty}\inner{\Psi}{ (e^{-tH_0/j} e^{tV/j})^j\Phi} \geq 0.
\end{align}
Therefore $e^{t \til{Q}_{+1}}$ is positivity preserving. 
Next we  show that $e^{-t\til{Q}_{+1}}$ is positivity improving.  
We can assume that $\alpha \leq \beta$ without loss of generality.
Let $P_{\leq k}$ be the projection defined by
$$
  P_{\leq k} = 
\begin{pmatrix}
  \sum_{4n\leq k} \ket{2n}\bra{4n}& 0\\
0&   \sum_{4n+2\leq k} \ket{4n+2}\bra{4n+2}
\end{pmatrix}
$$
It is immediately seen that 
$\Psi \geq P_{\leq k} \Psi $  for any $\Psi \in \conn$ 
and 
 $e^{tV/j}\Psi \geq (1+tV/j)\Psi$.
For $k'\geq k$, we set $v = \vvv {\ket{4k}\\0}$ and $v'=\vvv {0\\ \ket{4k'}}$.
Then we have
\begin{align*}
  \inner{v'}{e^{-t\til{Q}_{+1}} v} &= \lim_{j\to \infty} \inner{v'}{ (e^{-tH_0/j} e^{tV/j})^j v}  
%  & \geq  \lim_{j\to \infty}\inner{v'}{ (e^{-tH_0/j} e^{tV/j})^{j-1} (e^{-tH_0/j} P_{\leq k'} e^{tV/j}) v} \\
  \geq  \varlimsup_{j\to \infty}\inner{v'}{ (e^{-tH_0/j} P_{\leq k'}  e^{tV/j})^j v}  \\
  & \geq  \varlimsup_{j\to \infty}\inner{v'}{ (e^{-t(k'+(1/2))\beta/j} P_{\leq k'}e^{tV/j})^j v} \\
&   \geq  e^{-t(k'+(1/2))\beta} \varlimsup_{j\to \infty}\inner{v'}{ (P_{\leq k'} (1+tV/j))^j v}.
\end{align*}
Note that $e^{-tH_0}(1+tV/j)$ is still positivity preserving.
For all $\ell=2k'-2k$, we have
\begin{align*}
&  \varlimsup_{j\to \infty}\inner{v'}{ ( P_{\leq k'} (1+tV/j))^j v}
 \geq 
   \varlimsup_{j\to \infty}\inner{v'}{ {}_jC_\ell (tV/j)^\ell v} 
 \geq
   \varlimsup_{j\to \infty}\inner{v'}{ {}_jC_\ell (t(a^*)^2/j)^\ell v} \\
& =
  t^\ell  \varlimsup_{j\to \infty} {}_jC_\ell j^{-\ell} \inner{v'}{ (a^*)^{4k'-4k} v} 
 =
   \frac{t^\ell}{\ell !}  \varlimsup_{j\to \infty} \frac{j(j-1)\cdots (j-\ell-1)}{j^\ell } \inner{v'}{ (a^*)^{4k'-4k} v} \\
& =
   \frac{t^\ell}{\ell !}  \inner{v'}{ (a^*)^{4k'-4k} v} >0,
\end{align*}
where $_jC_k$ denotes the binomial coefficient.
Thus we have $\inner{v'}{e^{-t\til{Q}_{+1}}v}>0$.
Similarly  
$
  \inner{\vvv{\ket{4n}\\ 0}}{e^{-t\til{Q}_{+1}}\vvv{0\\ \ket{4n+2}}}>0$ is derived 
  for all $n$.
Thus $e^{-t\til{Q}_{+1}}$ is positivity improving. 
 \end{proof}

\section{No  crossings}
\label{sec.difference}
Recall that $E_n(\alpha,\beta)$ be  the $n$-th eigenvalue of $Q(\alpha,\beta)$, 
and the map $(\alpha,\beta)\mapsto E_n(\alpha,\beta)\in\RR$ is 
an  eigenvalue-curve.
It is shown that the  spectrum of $Q$ 
is $Spec (Q)=\bigcup_{\qq=\pm,\pp=1,2} Spec (Q_{\pp\qq})$, 
and 
all the eigenvalues in $Spec(Q_{\pp\qq})$  are simple. 
%As is shown in Section \ref{sec.jacobi}, 
%operator $Q_{\qq\pp}$
%can be represented as the Jacobi matrix $\widehat{Q}_{\qq\pp}$.
Now we are interested in operators,   
$\widehat{Q}_{-1}- \widehat{Q}_{+1}$ and 
$\widehat{Q}_{-2} - \widehat{Q}_{+2}$.
%%%%%%%%%%%%%%%%
\begin{theorem}\label{5.1}
Assume that 
\begin{align}
 \sqrt{ \alpha\beta} >  1 + \frac{1}{1600000000} 
\end{align}
Then
$\widehat{Q}_{-1} - \widehat{Q}_{+1} \geq \Delta(\alpha,\beta)$ and 
 $ \widehat{Q}_{-2} - \widehat{Q}_{+2} \geq   \Delta(\alpha,\beta)$, 
where 
\[
\Delta(\alpha,\beta) = 2\min\{\sqrt{\alpha/\beta}, \sqrt{\beta/\alpha}\} (\sqrt{\alpha\beta}-1-1/1600000000)> 0. 
\]
In particular $\lambda_{-1}(n)\geq  \lambda_{+1}(n)+\Delta(\alpha,\beta)$ and 
$\lambda_{-2}(n)\geq  \lambda_{+2}(n)+\Delta(\alpha,\beta)$.
\end{theorem}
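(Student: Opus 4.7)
The operator $\widehat{Q}_{-\pp} - \widehat{Q}_{+\pp}$ is itself a Jacobi matrix, so my first step is to read off its entries explicitly. From the definitions of $a_\qq$ and $b_{\qq\pp}$ the diagonal at site $n$ is $\tfrac12(b_{-\pp}(n)-b_{+\pp}(n))$, which alternates between $\alpha$ and $\beta$ (with the roles swapped for $\pp = 2$); the off-diagonal at $(n,n+1)$ equals $\tfrac12(a_-(n)-a_+(n)) = -c_n$, where
\[
c_n = \frac{\sqrt{2n+2}}{\sqrt{2n+1}+\sqrt{2n+3}}.
\]
A direct calculation shows $c_n > \tfrac12$ with $c_n \to \tfrac12$; writing $c_n = \tfrac12 + r_n$ one has $r_n = O(1/n^2)$, which is the slack that the constant $1/1600000000$ is meant to absorb.

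The heart of the proof is a weighted Cauchy-Schwarz estimate of the quadratic form. For any $u \in \ell^2$ and positive weights $t_n$,
\[
\langle u,(\widehat{Q}_{-\pp}-\widehat{Q}_{+\pp})u\rangle \geq \sum_n\bigl(d_n - c_n t_n - c_{n-1} t_{n-1}^{-1}\bigr)|u_n|^2,
\]
with $c_{-1}=0$ and $d_n\in\{\alpha,\beta\}$ the alternating diagonal. The choice that synchronises the two diagonal values is $t_n = \sqrt{\alpha/\beta}$ at sites where $d_n = \alpha$ and $t_n = \sqrt{\beta/\alpha}$ at sites where $d_n = \beta$. With this choice, substituting the approximation $c_n \approx \tfrac12$ makes the coefficient of $|u_n|^2$ equal to $\sqrt{\alpha/\beta}(\sqrt{\alpha\beta}-1)$ at $\alpha$-sites and $\sqrt{\beta/\alpha}(\sqrt{\alpha\beta}-1)$ at $\beta$-sites; the minimum of the two is exactly $\min\{\sqrt{\alpha/\beta},\sqrt{\beta/\alpha}\}(\sqrt{\alpha\beta}-1)$, which is the structural factor appearing in $\Delta(\alpha,\beta)$.

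To convert this asymptotic statement into an honest uniform lower bound I would then quantify the perturbation induced by $r_n = c_n - \tfrac12$. The extra negative contribution at site $n$ is bounded by $\sqrt{\alpha/\beta}(r_n + r_{n-1})$ on $\alpha$-sites (and its swap on $\beta$-sites), and a uniform upper bound on this quantity, obtained from the closed form of $c_n$, provides the additive correction that must fit inside $1/1600000000$; the hypothesis $\sqrt{\alpha\beta} > 1 + 1/1600000000$ is precisely what keeps the resulting $\Delta(\alpha,\beta)$ strictly positive. The boundary site $n=0$ has coefficient $\alpha - c_0\sqrt{\alpha/\beta}$, which already exceeds the asymptotic value and imposes no new constraint. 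Once the operator inequality is in hand, the eigenvalue bound $\lambda_{-\pp}(n) \geq \lambda_{+\pp}(n) + \Delta(\alpha,\beta)$ follows immediately from the min-max characterization of eigenvalues applied to the self-adjoint Jacobi matrices $\widehat Q_{\qq\pp}$.

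The main obstacle will be the tight numerical book-keeping required for the constant $1/1600000000$: the weighted Cauchy-Schwarz introduces some intrinsic slack in the off-diagonal estimate, and the overshoots $r_n$ persist across all $n$, so a crude worst-case bound is far too lossy. A careful summation of the $r_n$ contributions, and possibly a slight site-dependent refinement of the weights $t_n$ at small $n$, will be needed to squeeze the correction inside the small budget allowed by the stated form of $\Delta(\alpha,\beta)$.
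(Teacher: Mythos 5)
Your setup is the same skeleton as the paper's proof: your weighted Cauchy--Schwarz with parity-dependent weights $t_n$ is exactly the paper's conjugation of $\widehat{Q}_{-\pp}-\widehat{Q}_{+\pp}$ by the diagonal matrices $S_1=\diag[(\beta/\alpha)^{1/4},(\alpha/\beta)^{1/4},\dots]$ and $S_2$, which equalizes the diagonal at $\sqrt{\alpha\beta}$ while leaving the off-diagonal entries $\gamma_n=2c_n=\sqrt{(2n+2)(2n+3)}-\sqrt{(2n+1)(2n+2)}$ unchanged, and reduces the theorem to a norm bound on the free-type Jacobi matrix $F=J((\gamma_n)_{n\geq 0},0)$. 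Up to that point, including the observation that the site $n=0$ is harmless, your computation is correct.

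The genuine gap is the step you defer to ``careful summation of the $r_n$ contributions'': that mechanism cannot work, because the quadratic-form lower bound you wrote is governed by the \emph{worst single site}, not by any sum over sites. With parity-constant weights the coefficient at $n=1$ is $\sqrt{\beta/\alpha}\,\bigl(\sqrt{\alpha\beta}-(c_0+c_1)\bigr)$ with $c_0+c_1\approx 1.02$, so the best constant your scheme yields replaces $1+1/1600000000$ by roughly $1.02$, ten orders of magnitude off budget; and since $r_n\sim 1/(64n^2)$ falls below $6\times 10^{-10}$ only for $n\gtrsim 10^4$, a ``slight refinement at small $n$'' is not enough either. The excess must be transported along the chain by genuinely site-dependent weights, and this is the actual content of the paper's proof: it bounds $|\inner{v}{Fv}|\leq a_0|v_0|^2+\sum_{n\geq1}\bigl(a_n+\gamma_{n-1}^2/a_{n-1}\bigr)|v_n|^2$ and chooses $a_0=2$, $a_n=2-\gamma_{n-1}^2/a_{n-1}$ for $n\leq N_0=10000$ (so the per-site coefficient equals $2$ exactly there), checks \emph{numerically} that $a_n>0$ and $a_{N_0}>1$, and sets $a_n=1$ in the tail, where the monotone decrease of $\gamma_n$ gives coefficient $1+\gamma_{n-1}^2\leq 1+\gamma_{N_0}^2<2+1/(4N_0)^2$; the constant $1/1600000000=1/(4\cdot 10^4)^2$ is precisely this tail error. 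Your proposal names neither a concrete weight scheme nor the finite numerical verification that it does not break down, and that is the heart of the theorem. A final bookkeeping remark: carried out correctly, this route gives $\widehat{Q}_{-\pp}-\widehat{Q}_{+\pp}\geq \min\{\sqrt{\alpha/\beta},\sqrt{\beta/\alpha}\}(\sqrt{\alpha\beta}-1-1/1600000000)$, i.e.\ one half of the stated $\Delta(\alpha,\beta)$ --- your own asymptotic coefficients already show this, and the extra factor $2$ in the paper stems from dropping the overall $\tfrac12$ in the identity $S_1(\widehat{Q}_{-1}-\widehat{Q}_{+1})S_1=\sqrt{\alpha\beta}-\tfrac12 F$ --- so do not try to recover the factor $2$ by sharpening your estimate; it is not attainable this way.
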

%%%%%%%%%%%%%%%%
\begin{proof}
We have 
\begin{align}
  \widehat{Q}_{-1}- \widehat{Q}_{+1} 
 = 
{\tiny  \frac{1}{2}
\left(\begin{array}{rrrrrrrr}
2\alpha &  - \gamma_0 &  &  &  &  & & \text{\huge 0} \\
-\gamma_0 & 2\beta & -\gamma_1 &  &  &  &  & \\
 & -\gamma_1 & 2\alpha & -\gamma_2 &  &  &  &\\
 &  & -\gamma_2 & 2\beta & -\gamma_3 &  &  & \\
 &  &  & -\gamma_3 & 2\alpha & -\gamma_4 &  & \\
 &  &  &  & -\gamma_4 & 2\beta & -\gamma_5 &\\
 &  &  &  &  & -\gamma_5 & 2\alpha & \ddots \\
\text{\huge 0} & & &&&&\ddots& \ddots
\end{array}\right),
}\end{align}
where $\gamma_n = \sqrt{(2n+2)(2n+3)}-\sqrt{(2n+1)(2n+2)}$.
We set
\begin{align}
  S_1 = \diag[(\beta/\alpha)^{1/4}, (\alpha/\beta)^{1/4}, (\beta/\alpha)^{1/4}, (\alpha/\beta)^{1/4}, \cdots], \\
  S_2 = \diag[(\alpha/\beta)^{1/4}, (\beta/\alpha)^{1/4}, (\alpha/\beta)^{1/4}, (\beta/\alpha)^{1/4}, \cdots].
\end{align}
Then we have
\begin{align}
& S_1 (\widehat{Q}_{-1} - \widehat{Q}_{+1}) S_1=
 S_2 (\widehat{Q}_{-2} - \widehat{Q}_{+2}) S_2= \\
&= {\tiny   \frac{1}{2}
\left(\begin{array}{rrrrrrrr}
2\sqrt{\alpha\beta} &  - \gamma_0 &  &  &  &  & \text{\huge 0}  \\
-\gamma_0 & 2\sqrt{\alpha\beta}& -\gamma_1 &  &  &  & \\
 & -\gamma_1 & 2\sqrt{\alpha\beta} & -\gamma_2 &  &  & \\
 &  & -\gamma_2 & 2\sqrt{\alpha\beta}& -\gamma_3 &  & \\
 &  &  & -\gamma_3 & 2\sqrt{\alpha\beta} & -\gamma_4 &  \\
 &  &  &  & -\gamma_4 & 2\sqrt{\alpha\beta}& \ddots \\
\text{\huge 0} & & &&&\ddots & \ddots
\end{array}\right).
}\end{align}
We set $ F = J((\gamma_n)_{n=0}^\infty, 0)$. 
Then $S_1(\widehat{Q}_{-1} - \widehat{Q}_{+1})S_1 = 2\sqrt{\alpha\beta}-F$.
Since $S_1$ is self-adjoint and invertible, we have
\begin{align*}
 & (\widehat{Q}_{-1} - \widehat{Q}_{+1}) \geq  (2\sqrt{\alpha\beta}-\norm{F})S_1^{-2} \geq (2\sqrt{\alpha\beta}-\norm{F}) \min\{\sqrt{\alpha/\beta},\sqrt{\beta/\alpha}\}.
\end{align*}
Similarly we have $(\widehat{Q}_{-2} - \widehat{Q}_{+2}) \geq (2\sqrt{\alpha\beta}-\norm{F}) \min\{\sqrt{\alpha/\beta},\sqrt{\beta/\alpha}\}$.
Hence it is sufficient to prove $\norm{F}<2(1+1/1600000000)$.
Let $v = (v_n)_{n=0}^\infty \in \ell^2$.
Then we have
\begin{align*}
 |\inner{v}{Fv}| &= 
  \left| \sum_{n=0}^\infty (\overline{v_n} \gamma_n v_{n+1} +{v_n} \gamma_n \overline{v_{n+1}} ) \right|   \\
      &\leq 2 \sum_{n=0}^\infty |v_n| \gamma_n |v_{n+1}|                          \leq  \sum_{n=0}^\infty \left( a_n |v_n|^2 + \frac{\gamma_{n}^2}{a_n} |v_{n+1}|^2 \right)
\end{align*}
for any $a_n > 0$. So it follows that 
\begin{align}
 |\inner{v}{Fv}| &\leq a_0 |v_0|^2 + \sum_{n=1}^\infty (a_n + \frac{\gamma_{n-1}^2}{a_{n-1}}) | v_n |^2.  \label{515}
\end{align}
We split \eqref{515} as 
\begin{align}
|\inner{v}{Fv}|  & \leq  a_0 |v_0|^2 + \sum_{n=1}^{N_0} (a_n + \frac{\gamma_{n-1}^2}{a_{n-1}}) | v_n |^2 
                         + \sum_{n=N_0+1}^\infty (a_n + \frac{\gamma_{n-1}^2}{a_{n-1}}) | v_n |^2 \label{bound.k}
\end{align}
with some $N_0$. 
We recursively define $a_n$  by 
\begin{align}
   a_0 = 2, 
\quad     a_n = 2 - \frac{\gamma_{n-1}^2}{a_{n-1}} \ ( n=1,2,3, \cdots, N_0) , 
\quad     a_n = 1 \ ( n \geq N_0+1). \label{def.an}
\end{align}
We  can compute the numerical value of $a_n$ from \eqref{def.an}, e.g. 
$  a_1 = 1.464\cdots$, $a_2 = 1.305\cdots$, $a_3 = 1.228\cdots$.
We take $N_0=10000$. Then one can easily check that 
$a_n>0$ for all $n < N_0$ and $a_{N_0}>1$.
Hence the inequality \eqref{bound.k} is valid for $N_0=10000$ and we have
\begin{align*}
|\inner{v}{Fv}|  & \leq  2 |v_0|^2 + 2 \sum_{n=1}^{N_0} | v_n |^2 
                         + \sum_{n=N_0+1}^\infty (a_n + \frac{\gamma_{n-1}^2}{a_{n-1}}) | v_n |^2 \\
& <  2 |v_0|^2 + 2 \sum_{n=1}^{N_0} | v_n |^2 
                         + \sum_{n=N_0+1}^\infty (1 + \gamma_{n-1}^2) | v_n |^2.
\end{align*}
where we used \eqref{def.an}.
On the other hand, we have
$
 \gamma_{n-1}^2 
= 1 + \frac{1}{(2n+\sqrt{4n^2-1})^2} $.
In particular $\gamma_{n-1}$ is monotonously decreasing. Therefore we have
\begin{align}
 |\inner{v}{Fv}|  & \leq  (1 + \gamma_{N_0}^2) \sum_{n=0}^\infty  | v_n |^2,
\end{align}
which implies that $\norm{F} \leq 1 + \gamma_{N_0}^2$.
Note that 
 \begin{align}
   \gamma_{N_0}^2 < \gamma_{N_0-1}^2 < 1 + \frac{1}{(4N_0)^2}= 1 + \frac{1}{1600000000}.
 \end{align}
Therefore $\norm{F}< 2(1+1/1600000000)$.
\end{proof}

The map 
$(\alpha,\beta)\mapsto \lambda_{\qq\pp}(n)=\lambda_{\qq\pp}(n, \alpha,\beta)$ 
is an eigenvalue-curve. 
It is immediate to see the corollary below by Theorem \ref{5.1}. . 
\begin{corollary}\label{nocrossing}
Let 
$$D=\{(\alpha,\beta)\in\RR\times\RR| \alpha>0,\beta>0, \alpha\ne\beta,
 \sqrt{ \alpha\beta} >  1 + \frac{1}{1600000000} 
\}.$$
Fix $\pp=1,2$. 
Then two eigenvalue-curves $\lambda_{-\pp}(n)$ and $\lambda_{+\pp}(n)$ have 
no crossing 
in the region $D$ for all $n$.
\end{corollary}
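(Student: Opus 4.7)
The plan is to observe that Corollary \ref{nocrossing} is essentially a geometric reformulation of the operator inequalities that Theorem \ref{5.1} has just established, so nearly all the analytic content is already available; what remains is to check that the constant $\Delta(\alpha,\beta)$ is strictly positive throughout the region $D$ and then read off the strict separation of the $n$-th eigenvalues.

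First I would note that the region $D$ is defined precisely so that its points $(\alpha,\beta)$ satisfy $\alpha,\beta>0$ and the quantitative hypothesis $\sqrt{\alpha\beta} > 1 + 1/1600000000$ of Theorem \ref{5.1}; therefore Theorem \ref{5.1} applies at every $(\alpha,\beta) \in D$. Next I would verify directly that
\[
\Delta(\alpha,\beta) = 2\min\{\sqrt{\alpha/\beta},\sqrt{\beta/\alpha}\}\bigl(\sqrt{\alpha\beta}-1-1/1600000000\bigr)
\]
is strictly positive on $D$: the prefactor $\min\{\sqrt{\alpha/\beta},\sqrt{\beta/\alpha}\}$ is a positive real number whenever $\alpha,\beta>0$, and the remaining factor is positive precisely by the defining inequality of $D$.

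Second, I would invoke the consequence already recorded at the end of Theorem \ref{5.1}, namely
\[
\lambda_{-\pp}(n,\alpha,\beta) \geq \lambda_{+\pp}(n,\alpha,\beta) + \Delta(\alpha,\beta)
\]
for $\pp=1,2$ and each $n\in\mathbb{N}_0$. Combining this with the strict positivity $\Delta(\alpha,\beta)>0$ established above gives the strict inequality $\lambda_{-\pp}(n,\alpha,\beta) > \lambda_{+\pp}(n,\alpha,\beta)$ at every point $(\alpha,\beta)\in D$. In particular $\lambda_{-\pp}(n,\alpha,\beta) \neq \lambda_{+\pp}(n,\alpha,\beta)$, which is exactly the no-crossing assertion for the $n$-th eigenvalue-curves of $Q_{-\pp}$ and $Q_{+\pp}$ throughout $D$.

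I do not anticipate any substantive obstacle: the hard work (the min-max argument that converts the Jacobi-matrix operator inequality into an inequality between the $n$-th eigenvalues, and the delicate bound $\|F\| < 2(1+1/1600000000)$) is entirely carried out inside Theorem \ref{5.1}, so the corollary is a two-line deduction whose only content is unpacking the definition of $D$ and noting that a uniform strict gap between two continuous (in fact, real-analytic) families of eigenvalues precludes their graphs from meeting.
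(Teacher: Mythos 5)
Your proposal is correct and follows exactly the paper's route: the corollary is deduced directly from Theorem \ref{5.1}, since the definition of $D$ guarantees the hypothesis $\sqrt{\alpha\beta}>1+1/1600000000$ holds, so $\Delta(\alpha,\beta)>0$ and the eigenvalue inequality $\lambda_{-\pp}(n)\geq\lambda_{+\pp}(n)+\Delta(\alpha,\beta)$ gives strict separation, hence no crossing. The paper itself treats this as an immediate consequence, and your two-line deduction is precisely that argument.
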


\section{Numerical results}{\label{sec.finite}}
For finite sequences $a=(a_0,\cdots,a_{N-1})$ and $b=(b_0,\cdots,b_N)$, 
we define  
the $(N+1)$-dimensional Jacobi matrix, $J(a,b)$,  by
\begin{align}
 J(a,b) = 
\begin{pmatrix}
  b_0 & a_0 &       &            & \text{\Large 0}\\
  a_0 & b_1 & a_1 &    \\
        &  a_1 & b_2 &  \ddots & \\
        &        &  \ddots & \ddots & a_{N-1} \\
\text{\Large 0}       & &        & a_{N-1} & b_N
\end{pmatrix}.
\end{align}
For $\qq=\pm$ and $\pp=1,2$, we set
$a_\qq^N=(a_\qq(n))_{n=0}^{N-1}$ and 
$b_{\qq\pp}^N= (b_{\qq\pp}(n))_{n=0}^N $. 
Define a  finite Jacobi matrix by 
$
 \widehat{Q}_{\qq\pp}(N) = \frac{1}{2} J( a_\qq^N, b_{\qq\pp}^N)$.
We set 
\begin{align}
 & \Lambda_{+1}(N) =  \frac{1}{2} (\alpha\beta-1)\times
 \begin{cases}
    \min\{ \alpha^{-1}(2N+\frac{3}{2}), \beta^{-1}(2N+\frac{7}{2})  \} \quad 
    \text{if } N \text{ is even}\\
    \min\{ \beta^{-1}(2N+\frac{3}{2}), \alpha^{-1}(2N+\frac{7}{2})  \} \quad 
    \text{if } N \text{ is odd} 
 \end{cases}\\
 &\Lambda_{+2}(N) = \Lambda_{+1}(N)\Big|_{(\alpha,\beta)\to(\beta,\alpha)}  \\
 & \Lambda_{-1}(N) = \frac{1}{2}   (\alpha\beta-1)
 \begin{cases}
 \min\{ \alpha^{-1} (2N+\frac{5}{2}), \beta^{-1} (2N+\frac{9}{2})  \} \quad 
   \text{if } N \text{ is even}\\
  \min\{ \beta^{-1} (2N+\frac{5}{2}), \alpha^{-1} (2N+\frac{9}{2})  \} \quad 
   \text{if } N \text{ is odd} 
 \end{cases}\\
& \Lambda_{-2}(N) = \Lambda_{-1}(N) \Big|_{(\alpha,\beta)\to(\beta,\alpha)}
\end{align}
and
\begin{align}
\delta_{\pm,1}(N) = \begin{cases}
		\frac{1}{2}\alpha |a_\pm (N) | \quad \text{ if $N$ is even}\\
		\frac{1}{2}\beta  |a_\pm (N) | \quad  \text{ if $N$ is odd},
	       \end{cases}\quad 
\delta_{\pm,2}(N) = \delta_{\pm,1}(N) \Big|_{(\alpha,\beta)\to(\beta,\alpha)}. 
\end{align}
Since $\alpha\beta>1$, one has
$
 \Lambda_{\qq\pp}(N) = O(N) \to +\infty ~ (N\to +\infty)$.
Let $p_n$ be the orthogonal projection onto 
$e_n = (\delta_{n,j})_{j=0}^\infty \in \ell^2$. 
For a self-adjoint operator $T$, $\mu_n(T)$, $n=1,2,\cdots$, 
 denotes the $n$-th eigenvalue of $T$ counting multiplicity.  
For $n=0,1,\cdots, N$, we set 
\begin{align*}
 &   \lambda_{\qq\pp,N}(n) = \mu_n(\widehat{Q}_{\qq\pp}(N)), \\
 &   \lambda_{\qq\pp,N}^{\mathrm{upper}}(n) = 
        \mu_n( \widehat{Q}_{\qq\pp}(N) + \delta_{\qq\pp}(N) p_N), \\
 &    \lambda_{\qq\pp,N}^{\mathrm{lower}}(n) = 
    \mu_n( \widehat{Q}_{\qq\pp}(N) - \delta_{\qq\pp}(N) p_N).
\end{align*}
The eigenvalues of $\widehat{Q}_{\qq\pp}$ can be approximated by the eigenvalues of 
the $(N+1)$-dimensional matrix $\widehat{Q}_{\qq\pp}(N)$ in the following sense.
%%%%%%%%
 \begin{theorem}{\label{thm6.1}}
Fix $N\in \BN$, $\qq=\pm$ and $\pp=1,2$. 
Let $n\in \BN$ be a number such that
\begin{align}
  \lambda_{\qq\pp,N}^{\mathrm{upper}}(n)  \leq  \Lambda_{\qq\pp}(N). 
  \label{6.8oi}
\end{align}
Then it follows that 
\begin{align}
\lambda_{\qq\pp,N}^{\mathrm{lower}}(n) 
 \leq \lambda_{\qq\pp}(n)  \leq \lambda_{\qq\pp,N}^{\mathrm{upper}}(n) 
\label{610ye}
\end{align}
In particular, the error is estimated as 
$
  |  \lambda_{\qq\pp}(n) -  \lambda_{\qq\pp,N}(n) )  |
\leq
\lambda_{\qq\pp,N}^\mathrm{upper}(n) - \lambda_{\qq\pp,N}^\mathrm{lower}(n) $.
 \end{theorem}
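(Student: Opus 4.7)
The strategy is variational bracketing with respect to the orthogonal decomposition $\ell^2 = H_N \oplus H_N^\perp$, where $H_N = \mathrm{span}\{e_0, \ldots, e_N\}$. In block form,
\[
\widehat{Q}_{\qq\pp} = \begin{pmatrix} \widehat{Q}_{\qq\pp}(N) & B^\ast \\ B & R_N \end{pmatrix},
\]
where $R_N$ is the infinite Jacobi tail acting on $H_N^\perp$ and $B:H_N\to H_N^\perp$ has its single non-zero matrix entry $\inner{e_{N+1}}{Be_N}=a_\qq(N)/2$. The key observation is that for $\phi\in H_N$ we have $\inner{\phi}{\widehat{Q}_{\qq\pp}\phi}=\inner{\phi}{\widehat{Q}_{\qq\pp}(N)\phi}$, and the inter-block coupling touches only the coordinates $e_N$ and $e_{N+1}$.

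For the upper bound I will apply the min--max principle to $\widehat{Q}_{\qq\pp}$ with the trial subspace equal to the span of the first $n+1$ eigenvectors of $\widehat{Q}_{\qq\pp}(N)$, regarded as a subspace of $H_N$. Thanks to the observation above this subspace gives a trial value $\lambda_{\qq\pp,N}(n)$, which is dominated by $\lambda_{\qq\pp,N}^{\mathrm{upper}}(n)$ because $\delta_{\qq\pp}(N)p_N\ge 0$. This half of the argument does not require the hypothesis \eqref{6.8oi}.

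For the lower bound I will use the dual form of min--max: for any $W$ with $\dim W=n$,
\[
\lambda_{\qq\pp}(n)\ \ge\ \inf\bigl\{\inner{\phi}{\widehat{Q}_{\qq\pp}\phi}\ :\ \phi\perp W,\ \norm{\phi}=1\bigr\}.
\]
I take $W$ to be the span of the first $n$ eigenvectors of $\widehat{Q}_{\qq\pp}(N)-\delta_{\qq\pp}(N)p_N$, viewed inside $H_N$. Given any $\phi\perp W$ with $\norm{\phi}=1$, I write $\phi=\phi_1+\phi_2$ with $\phi_j$ in the respective block; note $\phi_1\perp W$. Expanding by blocks,
\[
\inner{\phi}{\widehat{Q}_{\qq\pp}\phi}
=\inner{\phi_1}{\widehat{Q}_{\qq\pp}(N)\phi_1}+\inner{\phi_2}{R_N\phi_2}+2\,\mathrm{Re}\,\inner{\phi_1}{B^\ast\phi_2},
\]
and I estimate the cross term, whose modulus is at most $|a_\qq(N)|\,|\phi_1(N)|\,|\phi_2(N+1)|$, via Young's inequality with parameter $t=2\delta_{\qq\pp}(N)/|a_\qq(N)|$, which equals $\alpha$ or $\beta$ according to the parity of $N$. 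The $\phi_1$-piece is then absorbed into $\widehat{Q}_{\qq\pp}(N)-\delta_{\qq\pp}(N)p_N$, and by $\phi_1\perp W$ and min--max it is bounded below by $\lambda_{\qq\pp,N}^{\mathrm{lower}}(n)\norm{\phi_1}^2$; the $\phi_2$-piece reduces to the tail operator inequality
\[
R_N-\frac{|a_\qq(N)|^2}{4\,\delta_{\qq\pp}(N)}\,p_{N+1}\ \ge\ \Lambda_{\qq\pp}(N)
\]
on $H_N^\perp$, which, combined with the standing hypothesis $\lambda_{\qq\pp,N}^{\mathrm{upper}}(n)\le\Lambda_{\qq\pp}(N)$ and $\lambda_{\qq\pp,N}^{\mathrm{lower}}(n)\le\lambda_{\qq\pp,N}^{\mathrm{upper}}(n)$, delivers at least $\lambda_{\qq\pp,N}^{\mathrm{lower}}(n)\norm{\phi_2}^2$. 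Summation with $\norm{\phi_1}^2+\norm{\phi_2}^2=1$ closes the lower bound, and the error estimate is then immediate from \eqref{610ye}.

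The main obstacle is verifying the tail operator inequality above. Here the precise shape of $\Lambda_{\qq\pp}(N)$, with its factor $(\alpha\beta-1)$ and linear-in-$N$ dependence, must be extracted from the structure of $R_N$, whose diagonal entries alternate between $\alpha(4n+c)/2$ and $\beta(4n+c)/2$ for $n\ge N+1$. I expect to follow the pattern of the Parmeggiani--Wakayama lower bound quoted in the introduction, applying the AM--GM estimate $\alpha x^2+\beta y^2\ge 2\sqrt{\alpha\beta}\,xy$ to consecutive coordinates in order to dominate the off-diagonal entries $a_\qq(n)$, and carefully compensating the rank-one subtraction $\tfrac{|a_\qq(N)|^2}{4\delta_{\qq\pp}(N)}p_{N+1}$ by a controlled reduction of the leading diagonal entry of $R_N$, keeping track of the parity-dependent minimum in the definition of $\Lambda_{\qq\pp}(N)$.
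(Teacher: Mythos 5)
Your overall scheme is the same as the paper's: bracket $\widehat{Q}_{\qq\pp}$ between modifications of $\widehat{Q}_{\qq\pp}(N)$ plus a tail dominated by $\Lambda_{\qq\pp}(N)$, then apply the min--max principle and use hypothesis \eqref{6.8oi} to guarantee that the $n$-th eigenvalue of each bracketing operator comes from the finite block. Your upper half is in fact a small simplification: the Rayleigh--Ritz argument with a trial subspace inside $H_N$ gives $\lambda_{\qq\pp}(n)\le\lambda_{\qq\pp,N}(n)\le\lambda_{\qq\pp,N}^{\mathrm{upper}}(n)$ unconditionally, whereas the paper reaches the upper half of \eqref{610ye} through the bracketing operator $(\widehat{Q}_{\qq\pp}(N)+\delta_{\qq\pp}(N)p_N)\oplus R(N)$ and needs \eqref{6.8oi} there as well. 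Your lower-bound bookkeeping (dual min--max with $W$ spanned by the first $n$ eigenvectors of $\widehat{Q}_{\qq\pp}(N)-\delta_{\qq\pp}(N)p_N$, weighted Young inequality at the interface coupling with weight $\alpha$ or $\beta$ according to the parity of $N$) is correct as far as it goes and is equivalent to the paper's operator inequality $\widehat{Q}_{\qq\pp}\ge(\widehat{Q}_{\qq\pp}(N)-\delta_{\qq\pp}(N)p_N)\oplus\Lambda_{\qq\pp}(N)$.

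The genuine gap is the tail inequality $R_N-\frac{|a_\qq(N)|^2}{4\delta_{\qq\pp}(N)}\,p_{N+1}\ \ge\ \Lambda_{\qq\pp}(N)$, which you flag as the ``main obstacle'' but do not prove. This is not a routine remainder: it is exactly where all the work in the paper's proof lies, and the tool you name is not the one that produces $\Lambda_{\qq\pp}(N)$. The symmetric estimate $\alpha x^2+\beta y^2\ge 2\sqrt{\alpha\beta}\,xy$ yields constants of size $\sqrt{\alpha\beta}$ and cannot by itself generate the factor $(\alpha\beta-1)$ together with the parity-dependent $\min\{\alpha^{-1}(2N+\tfrac32),\beta^{-1}(2N+\tfrac72)\}$ (and its $-\pp$ analogues). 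What is needed is to repeat along the whole tail the same weighted estimate you used at the interface: dominate each coupling $a_\qq(n)\,(e_n\odot e_{n+1}+e_{n+1}\odot e_n)$, $n\ge N+1$, by $|a_\qq(n)|\,(\ep_n\, e_n\odot e_n+\ep_n^{-1}e_{n+1}\odot e_{n+1})$ with $\ep_n$ alternating between $\alpha$ and $\beta$ so as to match the parity of the diagonal entries $b_{\qq\pp}(n)$, and then use $|a_+(n)|\le 2n+\tfrac32$ (resp.\ $|a_-(n)|\le 2n+\tfrac52$) to verify that the residual diagonal entry at each tail site is at least $\tfrac12(\beta-\alpha^{-1})(2N+4n+\tfrac32)$ or $\tfrac12(\alpha-\beta^{-1})(2N+4n+\tfrac72)$ for even $N$ (similarly for odd $N$), whose infimum over the tail is exactly $\Lambda_{\qq\pp}(N)$; your rank-one subtraction at $e_{N+1}$ is precisely the $\tfrac12\ep_N^{-1}|a_\qq(N)|$ contribution and is absorbed in the same computation. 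Until this chain of estimates (or an equivalent one) is actually carried out, the lower half of \eqref{610ye}, and hence the error bound, remains unproven.
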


We give an example below:
%An application of Theorem \ref{thm6.1} is discussed in Example \ref{ex6.2} below.
%%%%%%%%%%%%%%%%
 \begin{example}{\label{ex6.2}}
We set $\mathcal{Q}_\pm=\widehat{Q}_{+1}(N) \pm \delta_{+1}(N)p_N$.
We apply Theorem \ref{thm6.1} to the case
$\alpha=1$,
  $\beta=2$ and 
  $N=10$. 
  Then $\Lambda_{+1}(N)=5.875$ and 
$$
\begin{array}{ll}
\lambda_{+1,N}^\mathrm{upper}(0) = 0.366917859 \pm 0.000000001, \quad 
& \lambda_{+1,N}^\mathrm{lower}(0)  = 0.366917862 \pm 0.000000001, \\
 \lambda_{+1,N}^\mathrm{upper}(1) = 2.432911 \pm 0.000001, \quad
& \lambda_{+1,N}^\mathrm{lower}(1)  = 2.432920 \pm 0.000001, \\ 
\lambda_{+1,N}^\mathrm{upper}(2) = 4.7145 \pm 0.0001, \quad
& \lambda_{+1,N}^\mathrm{lower}(2)  = 4.7164 \pm 0.0001\\
 \lambda_{+1,N}^\mathrm{upper}(3) = 6.2717 \pm 0.0001, \quad
 & \lambda_{+1,N}^\mathrm{lower}(3)  = 6.2789 \pm 0.0001.
\end{array}
$$
Since $\lambda_{+1,N}^\mathrm{upper}(2) \leq \Lambda_{+1}(N)=5.875$, by Theorem \ref{thm6.1} we have 
 numerical bounds:
\begin{align*}
 0.36691785 \leq  &\lambda_{\qq\pp}(0) \leq 0.36691786,  \\
 2.43291  \leq       &\lambda_{\qq\pp}(1) \leq  2.43292, \\
  4.714    \leq        &\lambda_{\qq\pp}(2) \leq  4.717.
\end{align*}
This example does not include  the bound on
 $\lambda_{\qq\pp}(3)$, since the condition \eqref{6.8oi} is not valid for $n=3$.
 \end{example}
%%%%%%%%%%%%%%%%

 \begin{proof}[Proof of Theorem \ref{thm6.1}:]
We prove the theorem only for the case of $\qq=+$ and $\pp=1$. 
The other cases can be  similarly proven. 
For $u,v\in \ell^2$, we define the operator $u \odot v:\ell^2\to\ell^2$ by
$
     (u \odot v ) \Phi = \inner{v}{\Phi} u,$ for $\Phi \in \ell^2$.
Then operator $\widehat{Q}_{+1}$ can be expressed as
\begin{align*}
 \widehat{Q}_{+1} 
=
 \widehat{Q}_{+1}(N)\oplus 0
 + \sum_{n=N+1}^\infty b_{+1}(n) p_n
 + \sum_{n=N}^\infty a_+(n)  (e_n \odot e_{n+1} + e_{n+1} \odot e_n).
\end{align*}
We  can  show that $u\odot v +v\odot u \leq \ep u\odot u + \ep^{-1} v\odot v$ 
for all $\ep>0$. 
By using this inequality, we have
\begin{align*}
&  \sum_{n=N}^\infty a_+(n)  (e_n \odot e_{n+1} + e_{n+1} \odot e_n)
\leq 
  \sum_{n=N}^\infty  |a_+(n)|  (\ep_n e_n \odot e_n + \ep_n^{-1} e_{n+1} \odot e_{n+1}) \\
& = |a_+(N)| \ep_N p_N
 + \sum_{n=N+1}^\infty  (\ep_n |a_+(n)|+\ep_{n-1}^{-1}|a_+(n-1)|)  p_n
\end{align*}
for all $\ep_n>0$.
We take $\ep_{2n+1}=\beta$ and  $\ep_{2n}=\alpha$ for even $N$, and 
 $\ep_{2n+1}=\alpha$ and $\ep_{2n}=\beta$ for odd $N$.
Note that $|a_+(N)| \ep_N = \delta_{+1}(N) $.
First we consider the case of even  $N$.
Then, we have
\begin{align*}
&  \sum_{n=N+1}^\infty  (\ep_n |a_+(n)|+\ep_{n-1}^{-1}|a_+(n-1)|)  p_n \\
& =\sum_{n=0}^\infty  (\ep_{N+n+1} |a_+(N+n+1)|+\ep_{N+n}^{-1}|a_+(N+n)|)  
    p_{N+n+1} \\
& = \sum_{n=0}^\infty  (\ep_{N+2n+1} |a_+(N+2n+1)|+\ep_{N+2n}^{-1}|a_+(N+2n)|)  
    p_{N+2n+1} \\
&\quad + \sum_{n=0}^\infty  (\ep_{N+2n+2} |a_+(N+2n+2)|+\ep_{N+2n+1}^{-1}|a_+(N+2n+1)|)  
    p_{N+2n+2} \\
& = \sum_{n=0}^\infty  (\beta |a_+(N+2n+1)|+\alpha^{-1}|a_+(N+2n)|)  
    p_{N+2n+1} \\
& \quad + \sum_{n=0}^\infty  (\alpha |a_+(N+2n+2)|+\beta^{-1}|a_+(N+2n+1)|)  
    p_{N+2n+2}.
\end{align*}
Since $|a_+(n)|\leq 2n+\frac{3}{2}$, we have
\begin{align*}
&  \sum_{n=N+1}^\infty  (\ep_n |a_+(n)|+\ep_{n-1}^{-1}|a_+(n-1)|)  p_n \\
& \leq \sum_{n=0}^\infty  (\beta (2N+4n+2+\frac{3}{2})+\alpha^{-1}(2N+4n+\frac{3}{2}))  
       p_{N+2n+1} \\
& \quad + \sum_{n=0}^\infty  (\alpha (2N+4n+4+\frac{3}{2})+\beta^{-1}(2N+4n+2+\frac{3}{2}))  
       p_{N+2n+2}.
\end{align*}
By the definition of $b_{+1}(n)$, we have
\begin{align*}
 \widehat{Q}_{+1} 
\geq & ~
 \widehat{Q}_{+1}(N)\oplus 0  - \delta_{+1}(N) p_N\\
& + \frac{1}{2}\sum_{n=0}^\infty  \left(
         \beta(4N+8n+5) - \beta (2N+4n+\frac{7}{2}) - \alpha^{-1}(2N+4n+\frac{3}{2}) \right)
    p_{N+2n+1} \\
& + \frac{1}{2} \sum_{n=0}^\infty  \left(
     \alpha (4N+8n+9) - \alpha (2N+4n+\frac{11}{2})  -  \beta^{-1}(2N+4n+\frac{7}{2}) \right)  
    p_{N+2n+2} \\
 \geq & ~ \widehat{Q}_{+1}(N)\oplus 0  - \delta_{+1}(N) p_N
  + \frac{1}{2}\sum_{n=0}^\infty  (\beta-\alpha^{-1}) (2N+4n+\frac{3}{2}) 
    p_{N+2n+1} \\
& + \frac{1}{2} \sum_{n=0}^\infty (\alpha-\beta^{-1}) (2N+4n+\frac{7}{2}) 
    p_{N+2n+2}.
\end{align*}
Thus we have
$
 \widehat{Q}_{+1} \geq  (\widehat{Q}_{+1}(N) - \delta_{+1}(N) p_N) \oplus (\Lambda_{+1}(N))$.
We  can obtain  the same inequality for odd $N$.
In a similar way, we can furthermore obtain  the upper bound
$
 \widehat{Q}_{+1} \leq  (\widehat{Q}_{+1}(N) + \delta_{+1}(N) p_N)
  \oplus R(N)$,
where $R(N)$ is an operator such that $R(N)\geq \Lambda_{+1}(N)$.
By the min-max principle, we have
\begin{align*}
& \mu_n( (\widehat{Q}_{+1}(N) - \delta_{+1}(N) p_N)\oplus \Lambda_{+1}(N) )
 \leq  \mu_n(\widehat{Q}_{+1})   \\
&\leq 
 \mu_n( (\widehat{Q}_{+1}(N) + \delta_{+1}(N) p_N)\oplus R(N) ).
\end{align*}
Suppose that $\mu_n(\widehat{Q}_{+1}(N)+\delta_{+1}(N)p_N)\leq \Lambda_{+1}(N)$. 
Then
\begin{align*}
& \mu_n( (\widehat{Q}_{+1}(N) - \delta_{+1}(N) p_N)\oplus \Lambda_{+1}(N)) 
   = \mu_n(\widehat{Q}_{+1}(N)-\delta_{+1}(N) p_N),  \\
& \mu_n( (\widehat{Q}_{+1}(N)+\delta_{+1}(N) p_N)\oplus R(N) )
   =\mu_n(\widehat{Q}_{+1}(N)+\delta_{+1}(N) p_N).
\end{align*}
This proves \eqref{610ye}.
 \end{proof}

\section{Concluding remarks}
We can extend non-commutative harmonic oscillators to 
an infinite dimensional version. 
Let ${\mathscr F}=\oplus_{n=0}^\infty L_{\rm sym}^2({\mathbb R}^{n})$ be the boson Fock space, 
where $L_{\rm sym}^2({\mathbb R}^{n})$, $n\geq1$,  
denotes the set of symmetric square integrable functions, and $L^2(\RR^0)={\mathbb C}$.  
Let $a(f)$ and $\add (f)$, $f\in\LR$, be the annihilation operator and the creation operator, respectively, 
which satisfy  canonical commutation 
relations $[a(f), \add(g)]=(\bar f, g)$, $[a(f), a(g)]=0=[\add(f), \add(g)]$, and adjoint relation $(a(f))^\ast=\add(\bar f)$. 
Let $d\Gamma(\omega)=\int \omega(k)\add(k) a(k) dk$ be the second quantization of a real-valued multiplication 
$\omega$.
The scalar field is defined by $\phi(f)=\frac{1}{\sqrt 2}(\add(f)+a(\bar f))$ and 
its momentum conjugate by 
$\pi(f)=\frac{i}{\sqrt2}(\add(f)-a(\bar f))$. 
Thus we define the self-adjoint operator 
$$ H=A \otimes d\Gamma(\omega) + J \otimes 
\left(
i\phi(f) \pi(f)+\frac{1}{2} \|f\|^2
\right)$$
on ${\mathbb C}^2\otimes{\mathscr F}$.  
The spectrum of $H$ is not purely discrete.
It is interesting to consider  the existence of a ground state of $H$ and to estimate its multiplicity.

\section*{Acknowledgments}
IS thanks T. Nakamaru, H. Niikuni, T. Mine, S. Osawa,  A. Sakano and   H. Sasaki for useful comments.
IS's work was partly supported by Research supported by KAKENHI Y22740087, 
and was performed through the Program for Dissemination of Tenure-Track System  
funded by the Ministry of Education and Science, Japan.
FH acknowledges the financial support of Grant-in-Aid for Science Research (B) 20340032. 
We thank Masato Wakayama for his useful and helpful comments and also thank an invitation to the international conference 
"Spectral analysis of non-commutative harmonic oscillators and quantum devices" supported by 
the Ministry of Education, Culture, Sports, Science and  Technology in Japan and Institute of Math-for-Industry 
in Kyushu university 
in November of 2012.

\end{document}